\newtheorem{theorem}{Theorem}
\newtheorem{corollary}{Corollary}
\newtheorem{remark}{Remark}
\newcommand {\dfn} {:=}
\newcommand {\prob} {\ensuremath{\mathbb{P}}}
\newcommand {\reals} {\ensuremath{\mathbb{R}}}
\newcommand {\naturals} {\ensuremath{\mathbb{N}}}
\newcommand {\bE} {\ensuremath{\mathbb{E}}}
\newcommand {\cX} {{\cal X}}
\newcommand {\cY} {{\cal Y}}
\begin{document}
\thispagestyle{empty}
\setcounter{page}{1}
\setlength{\baselineskip}{1.15\baselineskip}

\title{\huge{Some Useful Integral Representations for Information-Theoretic Analyses}}

\author{Neri Merhav \quad Igal Sason\\[0.3cm]
The Andrew and Erna Viterbi Faculty of Electrical Engineering\\
Technion -- Israel Institute of Technology \\
Technion City, Haifa 3200003, Israel\\
E-mail: \{merhav,sason\}@ee.technion.ac.il}

\maketitle
\thispagestyle{empty}

\begin{abstract}
This work is an extension of our earlier article, where a well--known integral
representation of the logarithmic function was explored, and was accompanied
with demonstrations of its usefulness in obtaining compact, easily--calculable,
exact formulas for quantities that involve expectations of the logarithm of a
positive random variable. Here, in the same spirit, we derive an exact integral
representation (in one or two dimensions) of the moment of a non--negative
random variable, or the sum of such independent random variables, where the
moment order is a general positive non--integer real (also known as fractional moments).
The proposed formula is applied to a variety of examples with an information--theoretic
motivation, and it is shown how it facilitates their numerical evaluations.
In particular, when applied to the calculation of a moment of the sum of a
large number, $n$, of non--negative random variables, it is clear that integration
over one or two dimensions, as suggested by our proposed integral representation,
is significantly easier than the alternative of integrating over $n$ dimensions,
as needed in the direct calculation of the desired moment.\\[0.2cm]
{\it Index Terms}: Logarithmic expectation, moment--generating function,
fractional moments, differential R\'{e}nyi entropy, estimation errors,
multivariate Cauchy distributions, randomized guessing, jamming.
\end{abstract}

\break

\section{Introduction}
\label{section: introduction}

In mathematical analyses associated with many problems in information
theory and related fields, one is often faced with the need to compute
expectations of logarithmic functions of composite random variables (see, e.g.,
\cite{DZWY15,EvansB88,Knessl98,LapidothM-IT03,Martinez07,MS_Ent20,RajanT15,ZadehH16}),
or moments of such random variables, whose order may be a general positive real, not
even necessarily an integer (see, e.g., \cite{Arikan96,ArikanM98-1,ArikanM98-2,
Boztas97,BracherHL_IT19,BunteL14a,Campbell65,CV2014a,CV2014b,NM18_estimation,ISSV_IT18,
IS_Ent18,Sundaresan07,Sundaresan07b}).

In the case of the logarithmic function, the common practice is
either to resort approximate evaluations, provided by
upper and lower bounds on the desired expression (for example, by using
Jensen's inequality), or to approximate the calculations by using the Taylor series
expansion of the function $\ln x$. More recently, it has become popular to use the
replica trick (see, e.g., \cite[Chapter~8]{MM09}), which is a non-rigorous,
but useful technique, borrowed from statistical physics.

In our earlier work \cite{MS_Ent20}, we have demonstrated how the following
well--known integral representation of the logarithmic function,
\begin{align}
\label{ir}
\ln x=
\int_0^\infty\left(e^{-u}-e^{-ux}\right) \;
\frac{\mathrm{d}u}{u}, \quad x>0,
\end{align}
can be useful in a variety of application areas in the field of information theory,
including both source and channel coding, as well as other aspects of this field.
To calculate the expectation, $\bE\{\ln X\}$, where $X$ is a positive random variable,
the idea is simply to invoke the integral representation \eqref{ir} and to commute
the expectation and integration operators, i.e.,
\begin{align}
\label{Elnx}
\bE\{\ln X\}=
\int_0^\infty\left(e^{-u}-\bE\{e^{-uX}\}\right) \;
\frac{\mathrm{d}u}{u},
\end{align}
thereby replacing the calculation of $\bE\{\ln X\}$ by the calculation of the
moment--generating function (MGF), $M_X(u) := \bE\{e^{uX}\}$ for all $u \leq 0$, which is
often a lot easier to express in closed form. Moreover, in frequently encountered
situations where $X$ is given by the sum of $n$ independently identically distributed
(i.i.d.) random variables, the MGF of $X$ is given by the $n$--th power of the MGF
of a single random variable in the sum that forms $X$. This reduces the dimension
of the integration from $n$ (in the original expression) to a single dimension of
the integration over $u$. Interestingly, this integral representation has
also been used in the statistical physics literature (see, e.g., \cite{EN93},
\cite[p.~140]{MM09}, \cite{SSRCM19}), but not as much as the replica trick.

In this paper, we proceed in the same spirit as in \cite{MS_Ent20}, and we extend
the scope to propose an integral representation of a general moment of a non--negative
random variable, $X$, namely, the expectation, $\bE\{X^\rho\}$ for a given real $\rho > 0$.
Obviously, when $\rho$ is integer, this moment is simply given by the $\rho$--th order
derivative of the MGF of $X$, calculated at the origin, as is very well known. However,
the integral representation we propose, in this work, applies to any non--integer,
positive $\rho$, and here too, it replaces the direct calculation of $\bE\{X^\rho\}$
by integration of an expression that involves the MGF of $X$. We refer to this
representation as an {\em{extension}} of \eqref{Elnx}, as the latter can be obtained
as a special case of the formula for $\bE\{X^\rho\}$, by invoking one of the equivalent
identities
\begin{align}
\bE\{\ln X\} =\lim_{\rho\to 0}\frac{\bE\{X^\rho\}-1}{\rho}, \qquad
\bE\{\ln X\} =\lim_{\rho\to 0}\frac{\ln[\bE\{X^\rho\}]}{\rho}.
\end{align}
While the proposed integral representation of $\bE\{X^\rho\}$
can be readily obtained from \cite[p.~363, Identity~(3.434.1)]{GR14}
in the range $\rho\in(0,1)$, the non--trivial extension we propose
for a non--integer and real $\rho > 1$ is new to the best of our knowledge.

As in \cite{MS_Ent20}, the proposed integral representation is applied to a
variety of examples with an information--theoretic motivation, and it is
shown how it facilitates the numerical evaluations. In particular, similarly
as in the case of the logarithmic function, when applied to the calculation
of a moment of the sum of a large number, $n$, of non--negative random variables,
it is clear that integration over one or two dimensions, as suggested by our
proposed integral representation, is significantly easier than the alternative
of integrating over $n$ dimensions, as needed in the direct calculation of the
desired moment. Furthermore, single or double-dimensional integrals can be
instantly and accurately calculated using built-in numerical integration procedures.

Fractional moments have been considered in the mathematical literature
(see, e.g., \cite{Gzyl1, Gzyl2, Tagliani1, Tagliani2}).
A relationship between fractional and integer--order moments was considered
in \cite{Gzyl1} by expressing a fractional moment as an infinite
series which depends on all the positive integer--order moments, which was
followed by an algorithm for numerical calculations of fractional moments.

The outline of the remaining part of this paper is as follows. In Section
\ref{section: moments}, we provide the mathematical background associated
with the integral representation in general. In Section~\ref{section: applications},
we demonstrate this integral representation in applications, including:
moments of guesswork, moments of estimation errors, differential R\'enyi
entropies of generalized multivariate Cauchy distributions, and mutual
information calculations of a certain model of a jammed channel. Each one of
these examples occupies one subsection of Section~\ref{section: applications}.
The integral representations in this paper are not limited to the examples
in Section~\ref{section: applications}, and such representations can be proved
useful in other information--theoretic problems (see, e.g., \cite{MS_Ent20}
and references therein).

\section{Statistical Moments of Arbitrary Positive Orders}
\label{section: moments}

It is well known that any integer--order moment of a random variable
$X$ can be calculated from its MGF
\begin{align}
\label{eq: MGF}
M_X(u) := \bE\bigl\{e^{uX}\bigr\}, \quad u \in \reals,
\end{align}
by using its $\rho$--th order derivative, calculated
at $u=0$, i.e.,
\begin{align}
\label{well-known}
\bE\{X^\rho\} = M_X^{(\rho)}(0), \quad \rho \in \naturals.
\end{align}
Quite often, however, there is a theoretical and practical interest to
calculate fractional moments of non-negative random variables.
We next obtain {a closed--form integral expression of the $\rho$--th
moment of a non-negative random variable $X$, as a functional of its
MGF, for any positive real $\rho$. Before we proceed, it should be noted
that for $\rho \in (0,1)$, such an expression is available in
handbooks of standard tables of integrals, for example, in
\cite[p.~363, Identity~(3.434.1)]{GR14}. The first innovation here,
however, is in a non--trivial extension of this formula for all $\rho > 0$
as an expression that involves a one--dimensional integral. It should
be noted that although the definition of a fractional moment of a RV
is also given by a one--dimensional integral (or a sum, depending on whether
the RV is discrete or continuous), the utility of our formula is, e.g.,
in expressing the $\rho$-th moment of a sum of non--negative and independent
random variables as a one--dimensional integral, instead of an $n$--dimensional
integral which is obtained by the direct definition. This new formula serves
as the basic building block in all our information-theoretic applications
throughout this paper.

We first define the Beta and Gamma functions (see, e.g.,
\cite[Section~8.3]{GR14} and \cite[Chapter~5]{OlverLBC10}):
\begin{align}
\label{eq1: Beta}
B(u,v) &\dfn \int_0^1 t^{u-1} (1-t)^{v-1} \, \mathrm{d}t, \quad u,v > 0, \\[0.1cm]
\label{eq: Gamma}
\Gamma(u) &\dfn \int_0^\infty t^{u-1} e^{-t} \, \mathrm{d}t, \quad u > 0,
\end{align}
where these functions are related by the equality
\begin{align}
\label{eq2: Beta}
B(u,v) = \frac{\Gamma(u) \, \Gamma(v)}{\Gamma(u+v)}, \quad u,v > 0.
\end{align}
\begin{theorem}
\label{thm: rho moment}
Let $X$ be a non-negative random variable
with an MGF $M_X(\cdot)$, and let
$\rho > 0$ be a non-integer real. Then,
\begin{align}
\bE\{X^\rho\} &= \frac1{1+\rho} \, \sum_{\ell=0}^{\lfloor \rho \rfloor}
\frac{\alpha_\ell}{B(\ell+1,\rho+1-\ell)} \nonumber \\
\label{eq: rho>0 non-int.}
&\hspace*{0.4cm} + \frac{\rho \, \sin(\pi \rho) \, \Gamma(\rho)}{\pi}
\int_0^\infty \frac1{u^{\rho+1}} \,
\Biggl( \, \sum_{j=0}^{\lfloor \rho \rfloor}
\biggl \{ \frac{(-1)^j \, \alpha_j}{j!} \; u^j \biggr\} \, e^{-u} - M_X(-u) \Biggr)
\, \mathrm{d}u,
\end{align}
where for all $j \in \{0, 1, \ldots, \}$
\begin{align}
\label{eq: alpha_k}
\alpha_j &\dfn \bE\bigl\{(X-1)^j\bigr\} \\
\label{eq2: alpha_k}
& \hspace*{0.1cm} = \frac1{j+1} \sum_{\ell=0}^j \frac{(-1)^{j-\ell} \,
M_X^{(\ell)}(0)}{B(\ell+1,j-\ell+1)}.
\end{align}
\end{theorem}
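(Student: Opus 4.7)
The plan is to first derive a raw integral representation of $x^{\rho}$ via the analytic continuation of the Gamma function, take its expectation to obtain a formula in terms of the ordinary moments $\bE\{X^{j}\}$, and then reshape the polynomial part so that the central moments $\alpha_{j}$ appear with an $e^{-u}$ prefactor, matching the form stated in \eqref{eq: rho>0 non-int.}.

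The starting point is the standard analytic continuation of the Gamma function, obtained by iterating $\Gamma(s+1)=s\,\Gamma(s)$ (equivalently, by integration by parts of Euler's integral): for every $n\in\{0,1,\ldots\}$ and every real $s\in(-n-1,-n)$,
\[
\Gamma(s)=\int_{0}^{\infty}u^{s-1}\Biggl[e^{-u}-\sum_{j=0}^{n}\frac{(-u)^{j}}{j!}\Biggr]\mathrm{d}u.
\]
Setting $s=-\rho$, $n=\lfloor\rho\rfloor=:k$, rescaling $u\mapsto ux$, and trading $\Gamma(-\rho)$ for $\Gamma(\rho)$ via the reflection formula $\Gamma(z)\Gamma(1-z)=\pi/\sin(\pi z)$ together with $\Gamma(1-\rho)=-\rho\,\Gamma(-\rho)$, yields the pointwise identity
\[
x^{\rho}=\frac{\rho\sin(\pi\rho)\,\Gamma(\rho)}{\pi}\int_{0}^{\infty}\frac{1}{u^{\rho+1}}\Biggl[\sum_{j=0}^{k}\frac{(-ux)^{j}}{j!}-e^{-ux}\Biggr]\mathrm{d}u,\quad x>0.
\]
Averaging over $X$ and invoking Fubini (the Taylor-remainder bound $|e^{-uX}-\sum_{j=0}^{k}(-uX)^{j}/j!|\le(uX)^{k+1}/(k+1)!$ handles small $u$, while $|M_X(-u)|\le 1$ plus polynomial growth of the truncated exponential handles large $u$, under the implicit moment condition $\bE\{X^{k+1}\}<\infty$) produces the intermediate representation
\[
\bE\{X^{\rho}\}=\frac{\rho\sin(\pi\rho)\,\Gamma(\rho)}{\pi}\int_{0}^{\infty}\frac{1}{u^{\rho+1}}\Biggl[\sum_{j=0}^{k}\frac{(-u)^{j}\bE\{X^{j}\}}{j!}-M_{X}(-u)\Biggr]\mathrm{d}u,
\]
which differs from the stated formula only in that its polynomial uses $\bE\{X^{j}\}$ coefficients and carries no $e^{-u}$ prefactor.

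To convert into the claimed form, I would substitute $\bE\{X^{j}\}=\sum_{\ell=0}^{j}\binom{j}{\ell}\alpha_{\ell}$ and swap summation orders, so that the bracketed polynomial equals $e^{-u}\sum_{\ell=0}^{k}\alpha_{\ell}(-u)^{\ell}/\ell!$ plus the correction $\sum_{\ell=0}^{k}(\alpha_{\ell}/\ell!)(-u)^{\ell}\bigl[\sum_{m=0}^{k-\ell}(-u)^{m}/m!-e^{-u}\bigr]$. Inserting this correction into the outer integral and invoking the extended Gamma identity above once more, now with $s=\ell-\rho\in(-(k-\ell+1),-(k-\ell))$ and truncation degree $k-\ell$, shows that the $\ell$-th correction contributes $-\alpha_{\ell}(-1)^{\ell}\Gamma(\ell-\rho)/\ell!$. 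The reflection formula rewritten as $\Gamma(\ell-\rho)=(-1)^{\ell+1}\pi/[\sin(\pi\rho)\,\Gamma(\rho+1-\ell)]$ (using $\sin(\pi(\ell-\rho))=(-1)^{\ell+1}\sin(\pi\rho)$), combined with $B(\ell+1,\rho+1-\ell)=\ell!\,\Gamma(\rho+1-\ell)/[\rho(\rho+1)\Gamma(\rho)]$, collapses all prefactors to $\alpha_{\ell}/[(1+\rho)B(\ell+1,\rho+1-\ell)]$, producing the first sum of \eqref{eq: rho>0 non-int.}. Identity \eqref{eq2: alpha_k} then follows directly by expanding $(X-1)^{j}$ via the binomial theorem, substituting $\bE\{X^{\ell}\}=M_{X}^{(\ell)}(0)$ from \eqref{well-known}, and recognizing $\binom{j}{\ell}=1/[(j+1)B(\ell+1,j-\ell+1)]$.

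The hard part is the bookkeeping in the conversion step: one must verify that each auxiliary integral $\int_{0}^{\infty}u^{\ell-\rho-1}[e^{-u}-\sum_{m=0}^{k-\ell}(-u)^{m}/m!]\,\mathrm{d}u$ falls in the correct Mellin strip (which holds precisely because $\rho\in(k,k+1)$ and $0\le\ell\le k$), and that the sign $(-1)^{\ell+1}$ produced by the reflection formula cancels the $(-1)^{\ell}$ from the polynomial coefficient so that all $k+1$ contributions combine with a common positive sign. Beyond the two key ingredients (the extended Gamma integral and the reflection formula), the remaining manipulations are essentially combinatorial.
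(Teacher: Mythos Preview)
Your proof is correct and takes a genuinely different route from the paper's. The paper defines an auxiliary integral
\[
F_\rho(\mu)=\int_0^\infty u^{-\rho-1}\Bigl[e^{-\mu u}-\sum_{j=0}^{\lfloor\rho\rfloor}\tfrac{(-1)^j}{j!}(\mu-1)^j u^j\,e^{-u}\Bigr]\mathrm{d}u,
\]
already built around powers of $(\mu-1)$ and the $e^{-u}$ prefactor, shows that its first $\lfloor\rho\rfloor$ derivatives vanish at $\mu=1$, computes $F_\rho^{(k)}(\mu)=(-1)^k\mu^{\rho-k}\Gamma(k-\rho)$ with $k=\lfloor\rho\rfloor+1$, and integrates back $k$ times using those initial conditions to pin down the constants; setting $\mu=X$ then produces the central moments $\alpha_j=\bE\{(X-1)^j\}$ directly, with no conversion step. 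You instead start from the classical Cauchy--Saalsch\"utz continuation of $\Gamma$, first obtaining an intermediate representation in the \emph{raw} moments $\bE\{X^j\}$ with no $e^{-u}$ on the polynomial, and then perform a combinatorial change of basis from raw to central moments: the $e^{-u}$ emerges as the resummed inner series, and each leftover tail $\sum_{m=0}^{k-\ell}(-u)^m/m!-e^{-u}$ is dispatched by the same Cauchy--Saalsch\"utz identity at the shifted abscissa $s=\ell-\rho$. Your route is more modular and makes the link to a textbook special-functions formula explicit; the paper's differentiate-then-reintegrate argument is more streamlined in that the target form is reached in one pass without the basis conversion. Both rely on the same two ingredients you identified (the regularized Gamma integral and the reflection formula), so the approaches are equivalent in depth though distinct in organization.
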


\begin{proof}
See Appendix~\ref{appendix: theorem - rho moment}.
\end{proof}

\begin{remark}
The proof of \eqref{eq: rho>0 non-int.} in Appendix~\ref{appendix: theorem - rho moment}
does not apply to $\rho \in \naturals$ (see \eqref{0120c}, \eqref{eq: c0} etc., where the
denominators vanish for $\rho \in \naturals$). In the latter case, by referring to the second
term on the right--hand side of \eqref{eq: rho>0 non-int.}, we get $\sin(\pi \rho) = 0$
and also the integral diverges (specifically, for $\rho \in \naturals$, the integrand
scales like $\frac1u$ for $u$ that is sufficiently close to zero), yielding an expression
of the type $0 \cdot \infty$. However, taking a limit in \eqref{eq: rho>0 non-int.}
where we let $\rho$ tend to an integer, and applying L'H\^{o}pital's rule can reproduce the
well--known result in \eqref{well-known}.
\end{remark}

\begin{corollary}
\label{corollary: rho moment}
For any $\rho\in(0,1)$,
\begin{align}
\label{eq: rho-in-(0,1)}
\bE\{X^\rho\} &= 1 + \frac{\rho}{\Gamma(1-\rho)} \int_0^{\infty}
\frac{e^{-u} - M_X(-u)}{u^{1+\rho}} \; \mathrm{d}u.
\end{align}
\end{corollary}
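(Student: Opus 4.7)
The plan is to obtain Corollary~\ref{corollary: rho moment} as a direct specialization of Theorem~\ref{thm: rho moment} to the range $\rho\in(0,1)$. The key simplifying observation is that in this range $\lfloor \rho \rfloor = 0$, so both finite sums in \eqref{eq: rho>0 non-int.} collapse to a single term (the $\ell=0$ and $j=0$ summands), and the single coefficient that appears is $\alpha_0 = \bE\{(X-1)^0\} = 1$.

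Next I would simplify the ``boundary'' constant term. Using \eqref{eq2: Beta} together with the functional equation $\Gamma(\rho+2) = (\rho+1)\Gamma(\rho+1)$, one computes
\begin{align*}
B(1,\rho+1) = \frac{\Gamma(1)\,\Gamma(\rho+1)}{\Gamma(\rho+2)} = \frac{1}{\rho+1},
\end{align*}
so the first term on the right--hand side of \eqref{eq: rho>0 non-int.} reduces to $\frac{1}{1+\rho}\cdot\frac{\alpha_0}{B(1,\rho+1)} = 1$. This will produce the stand--alone ``$1$'' in \eqref{eq: rho-in-(0,1)}.

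For the integral term, with $\lfloor \rho\rfloor=0$ the bracketed sum inside the integrand degenerates to $\alpha_0 e^{-u} = e^{-u}$, leaving precisely the integrand $u^{-(\rho+1)}\bigl(e^{-u}-M_X(-u)\bigr)$ appearing in \eqref{eq: rho-in-(0,1)}. The only remaining step is to rewrite the prefactor $\rho\sin(\pi\rho)\Gamma(\rho)/\pi$ by invoking Euler's reflection formula $\Gamma(\rho)\Gamma(1-\rho) = \pi/\sin(\pi\rho)$, which yields $\rho/\Gamma(1-\rho)$. Combining these three simplifications gives \eqref{eq: rho-in-(0,1)}.

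There is really no serious obstacle here, since the statement is a clean specialization of an already--proved formula; the only ingredient beyond bookkeeping is the reflection formula for the Gamma function. The one point requiring mild care is verifying that $\Gamma(1-\rho)$ is finite and nonzero for $\rho\in(0,1)$, which is immediate since $1-\rho\in(0,1)$ lies inside the domain of analyticity of $\Gamma$.
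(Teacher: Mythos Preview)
Your proposal is correct and follows essentially the same route as the paper: specialize Theorem~\ref{thm: rho moment} to $\rho\in(0,1)$ so that $\lfloor\rho\rfloor=0$ and $\alpha_0=1$, use the Beta--Gamma relation (together with $\Gamma(\rho+2)=(\rho+1)\Gamma(\rho+1)$) to reduce the first term to $1$, and then apply the reflection formula $\Gamma(\rho)\Gamma(1-\rho)=\pi/\sin(\pi\rho)$ to convert the prefactor into $\rho/\Gamma(1-\rho)$. The paper's proof is identical in substance, only more terse.
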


\begin{proof}
Eq.~\eqref{eq: rho-in-(0,1)} is due to Theorem~\ref{thm: rho moment},
and by using \eqref{eq: Gamma recursion}, \eqref{eq: Gamma identity 1}
(see Appendix~\ref{appendix: theorem - rho moment}) and $\alpha_0 \dfn 1$, which give
\begin{align}
& \Gamma(\rho) \, \Gamma(1-\rho) = \frac{\pi}{\sin(\pi \rho)}, \\
\label{eq: Betta identity 1}
& \frac1{1+\rho} \, \frac{\alpha_0}{B(1,\rho+1)}
= \frac1{1+\rho} \, \frac{\Gamma(\rho+2)}{\Gamma(\rho+1)} = 1.
\end{align}
\end{proof}

\begin{remark}
Corollary~\ref{corollary: rho moment} also follows from \cite[p.~363, Identity~(3.434.1)]{GR14}
(see \cite[Section~4]{MS_Ent20}).
\end{remark}

\begin{corollary} \cite{MS_Ent20}
\label{corollary: log. expectation}
Let $X$ be a positive random variable. Then,
\begin{align}
\label{log. expectation}
\bE\{\ln X\} = \int_0^{\infty} \frac{e^{-u} - M_X(-u)}{u} \; \mathrm{d}u.
\end{align}
\end{corollary}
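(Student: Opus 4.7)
The plan is to derive \eqref{log. expectation} from Corollary~\ref{corollary: rho moment} by passing to the limit $\rho\to 0^+$. The bridge is the identity $\bE\{\ln X\}=\lim_{\rho\to 0^+}(\bE\{X^\rho\}-1)/\rho$ stated in the introduction, which for a positive random variable with integrable logarithm follows from the pointwise limit $(X^\rho-1)/\rho\to\ln X$ together with a standard dominated convergence argument on the probability space.

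Starting from equation \eqref{eq: rho-in-(0,1)}, I would divide through by $\rho\in(0,1)$ to obtain
\begin{align*}
\frac{\bE\{X^\rho\}-1}{\rho} = \frac{1}{\Gamma(1-\rho)}\int_0^{\infty}\frac{e^{-u}-M_X(-u)}{u^{1+\rho}}\,\mathrm{d}u,
\end{align*}
and then take $\rho\to 0^+$. By the identity above, the left-hand side tends to $\bE\{\ln X\}$. On the right, the prefactor $1/\Gamma(1-\rho)$ tends to $1/\Gamma(1)=1$, and for every fixed $u>0$ the integrand converges pointwise to $(e^{-u}-M_X(-u))/u$. Once the limit is brought inside the integral, the claim \eqref{log. expectation} drops out immediately.

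The main technical step is therefore the interchange of limit and integration on the right-hand side, which I would justify by Lebesgue's dominated convergence theorem applied to the family of integrands indexed by $\rho\in(0,\rho_0)$ for some fixed $\rho_0<1$. The hard part is producing a single $\rho$-uniform integrable majorant, and it has to be built in two regimes. Near $u=0$, the elementary estimates $|e^{-u}-1|\le u$ and $|1-M_X(-u)|=\bE\{1-e^{-uX}\}\le \bE\{\min(uX,1)\}$ together show that $|e^{-u}-M_X(-u)|/u^{1+\rho}$ is dominated by an integrable function at the origin once $\rho_0$ is taken small enough. Away from the origin, splitting at some large $T$ and using the triangle bound $|e^{-u}-M_X(-u)|\le e^{-u}+M_X(-u)$, the monotone decay $M_X(-u)\downarrow 0$ as $u\to\infty$ (which holds because $\prob(X>0)=1$), and the simple inequality $u^{-1-\rho}\le u^{-1}$ on $[T,\infty)$ produce an integrable tail majorant uniformly in $\rho$. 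Assembling these two estimates legitimizes the passage to the limit and completes the proof.
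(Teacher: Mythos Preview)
Your argument is sound and is precisely the alternative route the paper itself sketches in the paragraph following the statement: start from Corollary~\ref{corollary: rho moment}, divide by $\rho$, and send $\rho\to 0^+$ using $\ln x=\lim_{\rho\to 0}(x^\rho-1)/\rho$. You have added the dominated-convergence bookkeeping that the paper omits, and your near-$0$ and tail majorants are the right ingredients.

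That said, the paper's primary proof (cited from \cite{MS_Ent20}) proceeds differently and more directly: it applies the pointwise identity \eqref{ir}, $\ln x=\int_0^\infty(e^{-u}-e^{-ux})\,\mathrm{d}u/u$, with $x=X$, and then swaps expectation and integration (Fubini/Tonelli) to obtain \eqref{log. expectation} in one step. This avoids any limiting procedure in $\rho$ and the associated uniform-majorant construction; the only analytic work is justifying the single interchange $\bE\int=\int\bE$. Your route, by contrast, passes through the $\rho$-moment formula and requires two interchanges (the one already embedded in Corollary~\ref{corollary: rho moment} and your additional dominated-convergence step), but it has the conceptual appeal of exhibiting \eqref{log. expectation} as the $\rho\to 0$ boundary case of the general moment representation.
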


A proof of \eqref{log. expectation} is presented in \cite[Section~2]{MS_Ent20}, based
on the integral representation of the logarithmic function in \eqref{ir},
and by interchanging the integration and the expectation. It can be alternatively
proved by using Corollary~\ref{corollary: rho moment}, and the identity
$\ln x = \underset{\rho \to 0}{\lim} \frac{x^\rho-1}{\rho}$ for $x>0$.
Identity~\eqref{log. expectation} has many useful information--theoretic applications
on its own right, as demonstrated in \cite{MS_Ent20}, and here we add even some more.
The current work is an extension and further development of \cite{MS_Ent20}, whose main
theme is in exploiting Theorem \ref{thm: rho moment} and studying its information--theoretic
applications, as well as some more applications of the logarithmic expectation.

\section{Applications}
\label{section: applications}

In this section, we exemplify the usefulness of the integral representation
of the $\rho$-th moment in Theorem~\ref{thm: rho moment} and the logarithmic
expectation in several problem areas in information theory and statistics.
These include analyses of randomized guessing, estimation errors, R\'enyi
entropy of $n$-dimensional generalized Cauchy distributions, and finally,
calculations of the mutual information for channels with a certain jammer model.
To demonstrate the direct computability of the relevant quantities, we also
present graphs of their numerical calculations.

\subsection{Moments of Guesswork}
\label{subsec: randomaized guessing}

Consider the problem of guessing the realization of a random variable
which takes on values in a finite alphabet, using a sequence of
yes/no questions of the form ``Is $X=x_1$?'', ``Is $X=x_2$?'', etc.,
until a positive response is provided by a party that observes the actual
realization of $X$. Given a distribution of $X$, a commonly used performance
metric for this problem is the expected number of guesses or, more generally,
the $\rho$-th moment of the number of guesses until $X$ is guessed successfully.
When it comes to guessing random vectors, say, of length $n$,
minimizing the moments of the number of guesses by different (deterministic or randomized)
guessing strategies has several applications and motivations in information theory,
such as sequential decoding, guessing passwords, etc., and it is also strongly
related to lossless source coding (see, e.g., \cite{Arikan96,ArikanM98-1,ArikanM98-2,
Boztas97,BracherHL_IT19,HanawalS11,MC20,MM19,ISSV_IT18,IS_Ent18,Sundaresan07,Sundaresan07b}).
In this vector case, the moments of the number of guesses behave as exponential functions
of the vector dimension, $n$, at least asymptotically, as $n$ grows without bound.
For random vectors with i.i.d.\ components, the best achievable asymptotic exponent
of the $\rho$-th guessing moment is expressed in \cite{Arikan96} by using the R\'{e}nyi
entropy of $X$ of order $\widetilde{\rho} := \frac1{1+\rho}$. Arikan assumed in
\cite{Arikan96} that the distribution of $X$ is known, and analyzed the optimal deterministic
guessing strategy, which orders the guesses according to non--increasing probabilities.
Refinements of the exponential bounds in \cite{Arikan96} with tight upper and
lower bounds on the guessing moments for optimal deterministic guessing were
recently derived in \cite{ISSV_IT18}. In the sequel, we refer to randomized
guessing strategies, rather than deterministic strategies, and we aim to derive
exact, calculable expressions for their associated guessing moments (as it is
later explained in this subsection).

Let the random variable $X$ take on values in a finite alphabet $\cX$. Consider
a random guessing strategy where the guesser sequentially submits a sequence of
independently drawn random guesses according to a certain probability distribution,
$\widetilde{P}(\cdot)$, defined on $\cX$. Randomized guessing strategies have the
advantage that they can be used by multiple asynchronous agents
which submit their guesses concurrently (see \cite{MC20} and \cite{MM19}).

In this subsection, we consider the setting of randomized guessing, and obtain an
exact representation of the guessing moment in the form of a one--dimensional
integral. Let $x \in \cX$ be any realization of $X$ and let the guessing distribution,
$\widetilde{P}$, be given. The random number, $G$, of independent guesses
until success has a geometric distribution:
\begin{align}
\label{20200115a1}
\mbox{Pr}\{G=k|x\} = \bigl[1 - \widetilde{P}(x) \bigr]^{k-1} \, \widetilde{P}(x),
\quad k \in \naturals,
\end{align}
and so, the corresponding MGF is equal to
\begin{align}
M_G(u|x) &= \sum_{k=1}^{\infty} e^{ku} \, \mbox{Pr}\{G=k|x\} \nonumber \\
\label{20200115a2}
&= \frac{\widetilde{P}(x)}{e^{-u} - \bigl(1-\widetilde{P}(x)\bigr)},
\quad u < \ln \frac1{1-\widetilde{P}(x)}.
\end{align}
In view of \eqref{eq: rho>0 non-int.}--\eqref{eq2: alpha_k} and \eqref{20200115a2},
for $x \in \cX$ and non--integer $\rho>0$,
\begin{align}
\label{20200115a3}
\bE\{G^\rho | x\} &= \frac1{1+\rho} \, \sum_{\ell=0}^{\lfloor \rho \rfloor}
\frac{\alpha_\ell}{B(\ell+1,\rho+1-\ell)} \\
\nonumber
&\hspace*{0.4cm} + \frac{\rho \, \sin(\pi \rho) \, \Gamma(\rho)}{\pi}
\int_0^\infty \frac1{u^{\rho+1}} \,
\Biggl( \, \sum_{j=0}^{\lfloor \rho \rfloor}
\biggl \{ \frac{(-1)^j \, \alpha_j}{j!} \; u^j \biggr\} \,
e^{-u} - \frac{\widetilde{P}(x)}{e^u - \bigl(1-\widetilde{P}(x)\bigr)}  \Biggr)
\, \mathrm{d}u,
\end{align}
with $\alpha_0 \dfn 1$, and for $j \in \naturals$
\begin{align}
\alpha_j &\dfn \bE\bigl\{ \bigl(G-1 \bigr)^j | X=x \bigr\} \nonumber \\
&= \sum_{k=1}^\infty (k-1)^j \, \bigl(1-\widetilde{P}(x)\bigr)^{k-1}
\, \widetilde{P}(x) \nonumber \\
\label{20200115a4}
&= \widetilde{P}(x) \, \mathrm{Li}_{-j}\bigl(1-\widetilde{P}(x)\bigr).
\end{align}
In \eqref{20200115a4} $\mathrm{Li}_{-j}(\cdot)$ is a polylogarithm (see, e.g.,
\cite[Section~25.12]{OlverLBC10}), which is given by
\begin{align}
\label{eq: polylogarithm}
\mathrm{Li}_{-j}(x) = \Bigl(x \, \frac{\mathrm{d}}{\mathrm{d}x} \Bigr)^j
\, \frac{x}{1-x}, \quad \forall \, j \in \naturals \cup \{0\},
\end{align}
with $\Bigl(x \, \frac{\mathrm{d}}{\mathrm{d}x}\Bigr)^j$ denoting differentiation
with respect to $x$ and multiplication of the derivative by $x$,
repeatedly $j$ times. In particular, we have
\begin{align}
\mathrm{Li}_0(x) = \frac{x}{1-x}, \quad \mathrm{Li}_{-1}(x) = \frac{x}{(1-x)^2},
\quad \mathrm{Li}_{-2}(x) = \frac{x(1+x)}{(1-x)^3},
\end{align}
and so on.
The function $\mathrm{Li}_{-j}(x)$ is a built--in function in the Matlab and
Mathematica softwares, which is expressed as $\mathrm{polylog}(-j,x)$.
By Corollary~\ref{corollary: rho moment}, if $\rho \in (0,1)$, then
\eqref{20200115a3} is simplified to
\begin{align}
\label{20200116a1}
\bE\{G^\rho|x\} &= 1 + \frac{\rho}{\Gamma(1-\rho)} \int_0^\infty
\frac{e^{-u}-e^{-2u}}{u^{\rho+1} \bigl[\bigl(1-\widetilde{P}(x)\bigr)^{-1}
- e^{-u} \bigr]} \; \mathrm{d}u.
\end{align}
Let $P$ denote the distribution of $X$. Averaging over $X$ to get the
unconditional $\rho$--th moment using \eqref{20200116a1}, one obtains
for all $\rho \in (0,1)$,
\begin{align}
\label{20200509a1}
\bE\{G^\rho\} = 1 + \frac{\rho}{\Gamma(1-\rho)}
\int_0^1 \frac{1-z}{(-\ln z)^{\rho+1}} \sum_{x \in \cX}
\frac{P(x) \bigl(1-\widetilde{P}(x)\bigr)}{1-z\bigl(1-\widetilde{P}(x)\bigr)} \; \mathrm{d}z,
\end{align}
where \eqref{20200509a1} is obtained by using the substitution $z := e^{-u}$.
A suitable expression of such an integral is similarly
obtained, for all $\rho > 0$, by averaging \eqref{20200115a3} over $X$.
In comparison, a direct calculation of the $\rho$--th moment gives
\begin{align}
\bE\{G^\rho\} = \sum_{x \in \cX} P(x) \, \bE\{G^\rho|x\}
\label{20200509a3}
= \sum_{k=1}^{\infty} \sum_{x \in \cX} k^\rho \bigl(1-\widetilde{P}(x)\bigr)^{k-1} \, \widetilde{P}(x)\, P(x).
\end{align}
The double sum in \eqref{20200509a3} involves a numerical computation of an infinite
series, where the number of terms required to obtain a good approximation increases
with $\rho$, and needs to be determined. The right--hand side of \eqref{20200509a1},
on the other hand, involves integration over $[0,1]$. For every practical purpose,
however, definite integrals in one or two dimensions can be calculated instantly
using built-in numerical integration procedures in MATLAB, Maple, Mathematica,
or any other mathematical software tools, and the computational complexity of the
integral in \eqref{20200509a1} is not affected by $\rho$.

As a complement to \eqref{20200115a3} (which applies to a non--integral
and positive $\rho$), we obtain that the $\rho$-th moment of
the number of randomized guesses, with $\rho \in \naturals$, is equal to
\begin{align}
\bE\{G^\rho|x\} &= \bE \bigl\{ \bigl[ \bigl( G - 1 \bigr) + 1 \bigr]^\rho \, | \, x \bigr\} \nonumber \\
&= \sum_{j=0}^\rho \binom{\rho}{j} \, \bE\bigl\{ \bigl( G - 1 \bigr)^j | x \bigr\} \nonumber \\
&= \sum _{j=0}^\rho \binom{\rho}{j} \, \alpha_j \nonumber \\
\label{20200115a5}
&= 1 + \widetilde{P}(x) \sum _{j=1}^\rho \biggl\{ \binom{\rho}{j} \,
\mathrm{Li}_{-j}\bigl(1-\widetilde{P}(x)\bigr) \biggr\},
\end{align}
where \eqref{20200115a5} follows from \eqref{20200115a4} and since $\alpha_0 = 1$.
By averaging over $X$,
\begin{align}
\label{20200115a6}
\bE\{G^\rho\} = 1 + \sum_{x \in cX} \Biggl\{ P(x) \widetilde{P}(x) \sum _{j=1}^\rho \biggl\{ \binom{\rho}{j} \,
\mathrm{Li}_{-j}\bigl(1-\widetilde{P}(x)\bigr) \biggr\} \Biggr\}.
\end{align}

To conclude, \eqref{20200115a3} and its simplification in \eqref{20200116a1} for
$\rho \in (0,1)$ give calculable one--dimensional integral expressions for the $\rho$--th
guessing moment with any $\rho > 0$. This refers to a randomized guessing strategy
whose practical advantages were further explained in \cite{MC20} and \cite{MM19}.
This avoids the need of numerical calculations of infinite sums. A further simplification
for $\rho \in \naturals$ is provided in \eqref{20200115a5} and \eqref{20200115a6},
expressed in closed form as a function of polylogarithms.

\subsection{Moments of Estimation Errors}
\label{subsec: estimation}

Let $X_1, \ldots, X_n$ be i.i.d.\ random variables with an unknown
expectation $\theta$ to be estimated, and consider the simple estimator,
\begin{align}
\label{estimator}
\widehat{\theta}_n = \frac1n \sum_{i=1}^n X_i.
\end{align}
For given $\rho > 0$, we next derive an easily-calculable expression
of the $\rho$-th moment of the estimation error.

Let $D_n \dfn \bigl( \widehat{\theta}_n - \theta \bigr)^2$ and $\rho' \dfn \frac{\rho}{2}$.
By Theorem~\ref{thm: rho moment}, if $\rho>0$ is a non--integral multiple of~2, then
\begin{align}
& \bE \bigl\{ \bigl| \widehat{\theta}_n - \theta \bigr|^{\rho} \bigr\} \nonumber \\
\label{20200110a2.0}
&= \bE \bigl\{ D_n^{\rho'} \bigr\} \\
&= \frac2{2+\rho} \, \sum_{\ell=0}^{\lfloor \rho/2 \rfloor}
\frac{\alpha_\ell}{B\bigl(\ell+1, \rho/2+1-\ell \bigr)} \nonumber \\
\label{20200110a2}
&\hspace*{0.4cm} + \frac{\rho}{2\pi} \; \sin\Bigl(\frac{\pi \rho}{2}\Bigr) \, \Gamma\Bigl(\frac{\rho}{2}\Bigr)
\int_0^\infty \frac1{u^{\rho/2+1}} \,
\Biggl( \, \sum_{j=0}^{\lfloor \rho/2 \rfloor}
\biggl \{ \frac{(-1)^j \, \alpha_j}{j!} \; u^j \biggr\} \, e^{-u} - M_{D_n}(-u) \Biggr)
\, \mathrm{d}u,
\end{align}
where
\begin{align}
\label{20200110a3}
M_{D_n}(-u) = \bE \bigl\{ \exp\bigl(-u \bigl( \widehat{\theta}_n - \theta \bigr)^2 \bigr) \bigr\},
\quad \forall \, u \geq 0,
\end{align}
$\alpha_0 \dfn 1$, and for all $j \in \naturals$ (see \eqref{eq2: alpha_k})
\begin{align}
\label{20200110a4}
\alpha_j = \frac1{j+1} \sum_{\ell=0}^j \frac{(-1)^{j-\ell} \, M_{D_n}^{(\ell)}(0)}{B(\ell+1,j-\ell+1)}.
\end{align}
By Corollary~\ref{corollary: rho moment} and \eqref{20200110a2.0},
if in particular $\rho \in (0,2)$, then the right--hand side
of \eqref{20200110a2} is simplified to
\begin{align}
\label{20200110a5}
\bE \bigl\{ \bigl| \widehat{\theta}_n - \theta \bigr|^{\rho} \bigr\}
= 1 + \frac{\rho}{2 \, \Gamma(1- \frac12 \, \rho)} \int_0^\infty
u^{-(1+\frac12 \rho)} \, \bigl[ e^{-u} - M_{D_n}(-u) \bigr] \, \mathrm{d}u,
\end{align}
and, for all $k \in \naturals$,
\begin{align}
\label{20200110a6}
& \bE \bigl\{ | \widehat{\theta}_n - \theta|^{2k} \bigr\} = M_{D_n}^{(k)}(0).
\end{align}
In view of \eqref{20200110a2.0}--\eqref{20200110a6}, obtaining a closed--form expression for the
$\rho$-th moment of the estimation error, for an arbitrary $\rho > 0$, hinges on the calculation
of the right side of \eqref{20200110a3} for all $u \geq 0$.
To this end, we invoke the identity
\begin{align}
\label{eq: Gaussian identity}
e^{-uz^2} = \frac1{2 \sqrt{\pi u}} \int_{-\infty}^\infty e^{-j \omega z - \omega^2/(4u)}
\, \mathrm{d}\omega, \quad \forall \, u>0, \, z \in \reals,
\end{align}
which is the MGF of a zero--mean Gaussian random variable with variance $\frac{1}{2u}$.
Together with \eqref{20200110a3}, it gives (see Appendix~\ref{appendix: estimation}.1)
\begin{align}
\label{MGF D_n}
M_{D_n}(-u) = \frac1{2 \sqrt{\pi u}} \int_{-\infty}^\infty e^{-j \omega \theta} \,
\phi_X^{n}\Bigl(\frac{\omega}{n}\Bigr) \, e^{-\omega^2/(4u)} \, \mathrm{d}\omega,
\quad \forall \, u>0,
\end{align}
where $X$ is a generic random variable with the same distribution as of
$X_i$ for all $i$.

The combination of \eqref{20200110a2}--\eqref{20200110a6} enables to
calculate exactly the $\rho$-th moment
$\bE \bigl\{ | \widehat{\theta}_n - \theta |^{\rho} \bigr\}$,
for any given $\rho > 0$, in
terms of a two-dimensional integral.
Combining \eqref{20200110a5} and \eqref{MGF D_n} yields, for all $\rho \in (0,2)$,
\begin{align}
& \bE \bigl\{ \bigl| \widehat{\theta}_n - \theta \bigr|^{\rho} \bigr\} \nonumber \\
\label{20200111a1}
&= 1 + \frac{\rho}{2 \, \Gamma(1- \frac12 \, \rho)} \int_0^\infty
\int_{-\infty}^\infty u^{-(\rho/2+1)} \Bigl[ \tfrac12
\, e^{-u-|\omega|} - \frac1{2 \sqrt{\pi u}} \; \phi_X^{\, n}\Bigl(\frac{\omega}{n}\Bigr)
\, e^{-j \omega \theta - \omega^2/(4u)} \Bigr] \, \mathrm{d}\omega \, \mathrm{d}u,
\end{align}
where we have used the identity
$\int_{-\infty}^\infty \tfrac12 \, e^{-|\omega|} \, \mathrm{d}\omega = 1$ in the derivation
of the first term of the integral on the right--hand side of \eqref{20200111a1}.

As an example, consider the case where $\{X_i\}_{i=1}^n$ are
i.i.d.\ Bernoulli random variables with
\begin{align}
\label{PMF Bernoulli}
\prob\{X_1=1\} = \theta, \quad \prob\{X_1=0\}=1-\theta
\end{align}
where the characteristic function is given by
\begin{align}
\label{CF Bernoulli}
\phi_X(u) \dfn \bE \bigl\{ e^{juX} \bigr\} = 1 + \theta \bigl(e^{ju}-1\bigr), \quad u \in \reals.
\end{align}
Thanks to the availability of the exact expression, we can next compare the exact
$\rho$-th moment of the estimation error $|\widehat{\theta}_n - \theta|$, with the
following closed--form upper bound (see Appendix~\ref{appendix: estimation}.2)
and thereby assess its tightness:
\begin{align}
\label{UB - estimation}
\bE \bigl\{ \bigl| \widehat{\theta}_n - \theta \bigr|^{\rho} \bigr\}
\leq K(\rho, \theta) \cdot n^{-\rho/2},
\end{align}
which holds for all $n \in \naturals$, $\rho > 0$ and $\theta \in [0,1]$,
with
\begin{align}
\label{K - UB}
K(\rho, \theta) \dfn \rho \; \Gamma\Bigl(\frac{\rho}{2}\Bigr) \;
\bigl(2 \theta \, (1-\theta) \bigr)^{\rho/2}.
\end{align}

Figures~\ref{Fig1: estimation err} and~\ref{Fig2: estimation err}
display plots of $\bE \, \bigl| \widehat{\theta}_n - \theta \bigr|$
as a function of $\theta$ and $n$, in comparison to the upper bound
\eqref{UB - estimation}. The difference in the plot of
Figure~\ref{Fig1: estimation err} is significant except for the
boundaries of the interval $[0,1]$, where both the exact value
and the bound vanish. Figure~\ref{Fig2: estimation err} indicates
that the exact value of $\bE \, \bigl| \widehat{\theta}_n - \theta \bigr|$,
for large $n$, scales like $\sqrt{n}$; this is reflected from the
apparent parallelism of the curves in both graphs, and by the upper
bound \eqref{UB - estimation}.

\begin{figure}[hbt]
\centering
\vspace*{-4.7cm}
\includegraphics[width=11cm]{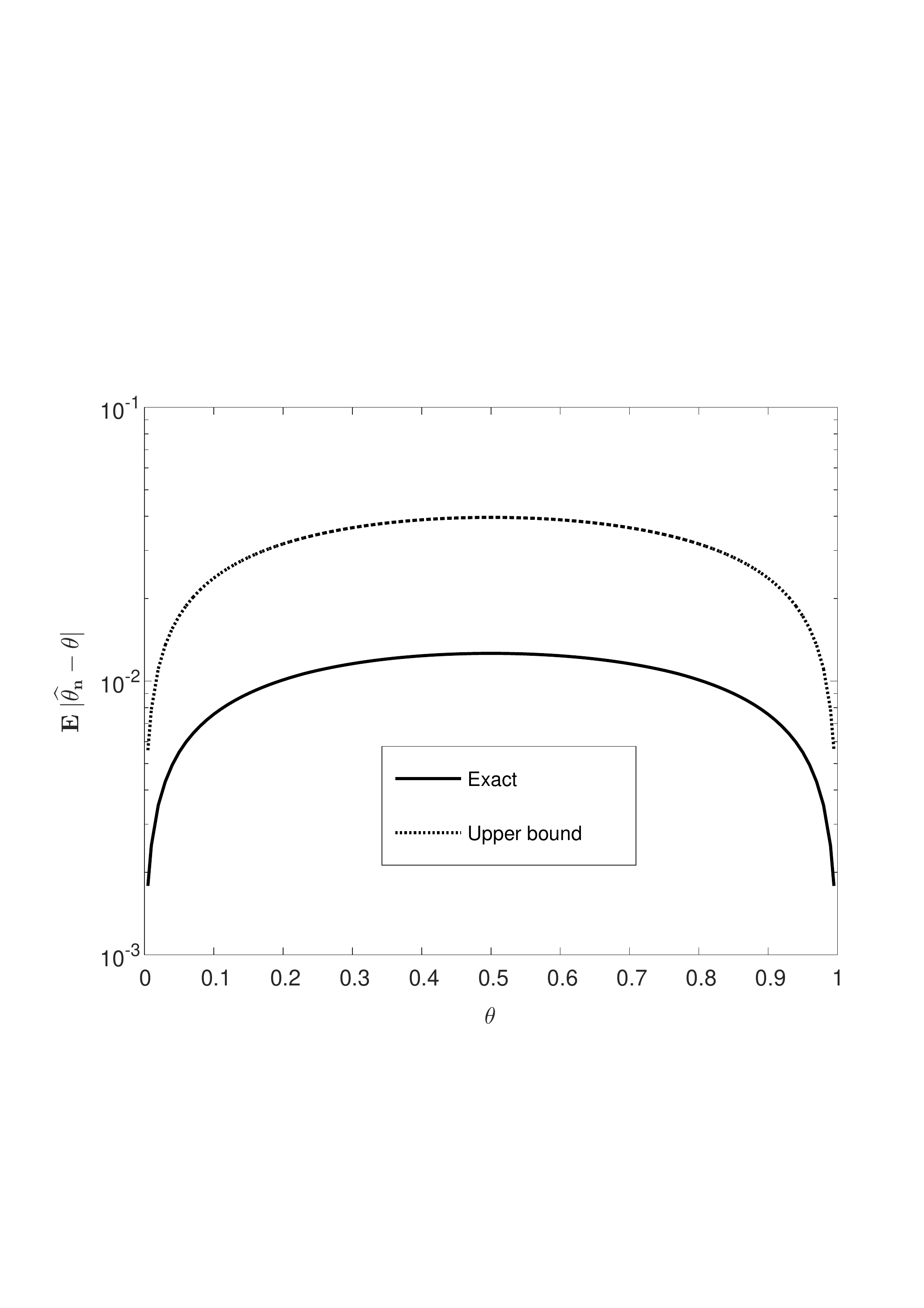}
\vspace*{-4cm}
\caption{\label{Fig1: estimation err}
$\bE \, \bigl| \widehat{\theta}_n - \theta \bigr|$
(see \eqref{20200111a1} and  \eqref{CF Bernoulli}) versus
its upper bound in \eqref{UB - estimation}
as functions of $\theta \in [0,1]$ with $n=1000$.}
\end{figure}
\begin{figure}[hbt]
\centering
\vspace*{-4.7cm}
\includegraphics[width=11cm]{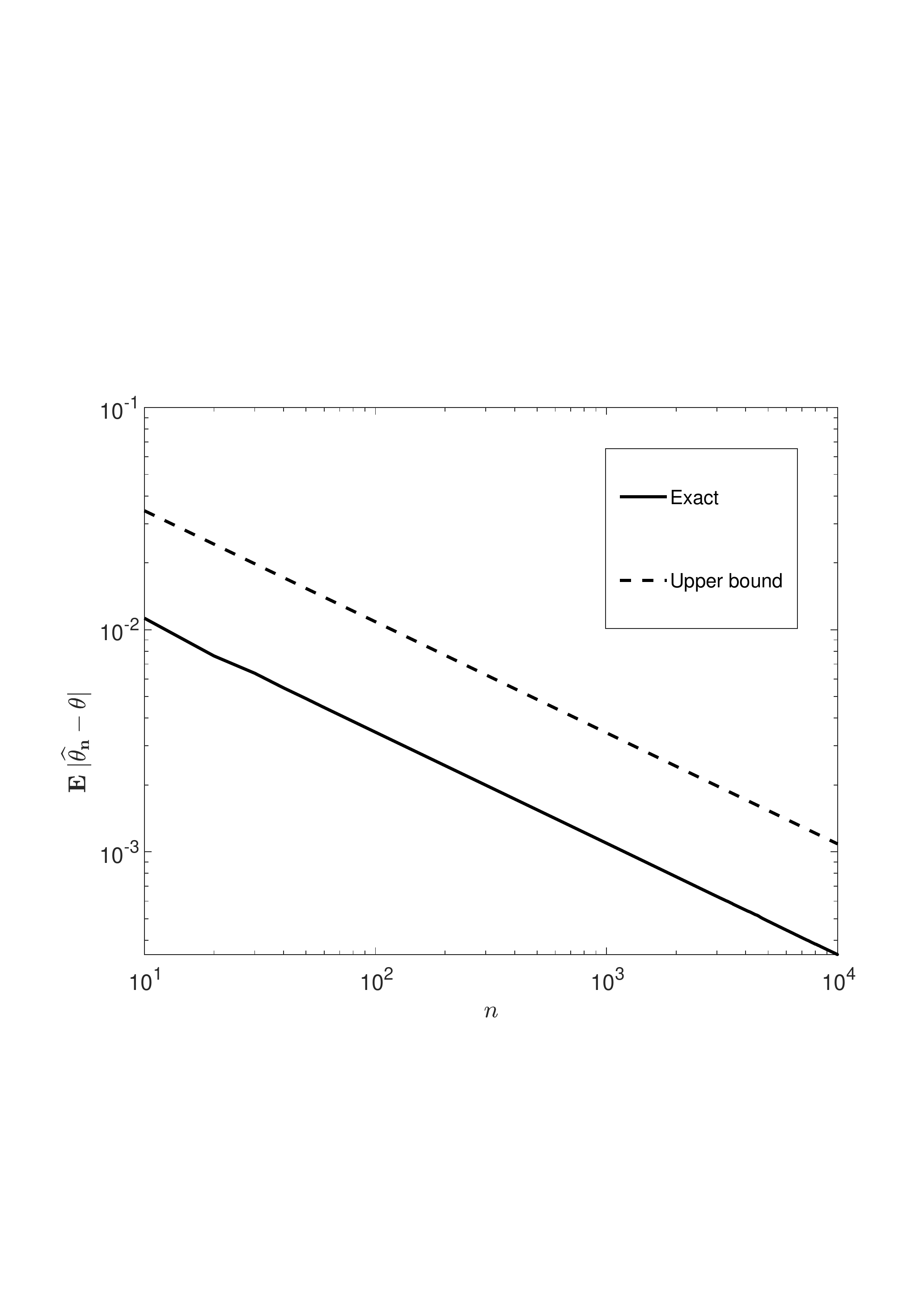}
\vspace*{-4cm}
\caption{\label{Fig2: estimation err} A plot of
$\bE \bigl| \widehat{\theta}_n - \theta \bigr|$
(see \eqref{20200111a1} and  \eqref{CF Bernoulli})
versus its upper bound in \eqref{UB - estimation}
as functions of $n$ with $\theta = \tfrac14$.}
\end{figure}

To conclude, this subsection provides an exact, double--integral expression
for the $\rho$-th moment of the estimation error of the expectation of $n$
i.i.d. random variables. In other words, the dimension of the integral
does not increase with $n$, and it is a calculable expression. We further
compare our expression with an upper bound that stems from concentration
inequalities. Although the scaling of the bound as a polynomial of $n$
is correct, the difference between the exact expression and the bound is
significant (see Fig.~\ref{Fig1: estimation err} and~\ref{Fig2: estimation err}).

\subsection{R\'{e}nyi Entropy of Extended Multivariate Cauchy Distributions}
\label{subsec: RE Cauchy}

Generalized Cauchy distributions, their mathematical properties, and applications
are of interest (see, e.g., \cite{Alzaatreh16, Carrillo10, MK18, MS_Ent20}). The Shannon
differential entropy of a family of generalized Cauchy distributions was derived in
\cite[Proposition~1]{Alzaatreh16}, and also, a lower bound on the differential entropy of
a family of extended multivariate Cauchy distributions (cf. \cite[Equation~(42)]{MK18}) was
derived in \cite[Theorem~6]{MK18}. Furthermore, an exact single-letter expression for the
differential entropy of the different family of extended multivariate Cauchy distributions
was recently derived in \cite[Section~3.1]{MS_Ent20}. Motivated by these studies, as
well as the various information-theoretic applications of R\'{e}nyi information measures, we
apply Theorem~\ref{thm: rho moment} to obtain the R\'{e}nyi (differential) entropy of an
arbitrary positive order $\alpha$ for the extended multivariate Cauchy distributions in
\cite[Section~3.1]{MS_Ent20}. As we shall see in this subsection, the integral representation
for the R\'{e}nyi entropy of the latter family of extended multivariate Cauchy distributions
is two-dimensional, irrespective of the dimension $n$ of the random vector.

Let $X^n = (X_1,\ldots,X_n)$ be a random vector whose probability density
function is of the form
\begin{align} \label{pdf gen. Cauchy}
f(x^n)=\frac{C_n}{\left[1+\sum_{i=1}^n g(x_i)\right]^q}, \quad
x^n = (x_1, \ldots, x_n) \in \reals^n,
\end{align}
for a certain function $g \colon \reals \to [0, \infty)$, and a positive
constant $q$ such that
\begin{align}
\int_{\reals^n} \frac{1}{\left[1+\sum_{i=1}^n g(x_i)\right]^q} \;
\mathrm{d}x^n < \infty.
\end{align}
We refer to this kind of density (see also \cite[Section~3.1]{MS_Ent20}) as
a {\it generalized multivariate Cauchy density} because the multivariate Cauchy
density function is the special case pertaining to the choices $g(x)=x^2$ and
$q=\tfrac12 (n+1)$.
The differential Shannon entropy of the generalized multivariate Cauchy
density was derived in \cite[Section~3.1]{MS_Ent20} using the integral
representation of the logarithm \eqref{ir}, where it was presented as a
two--dimensional integral.

We next extend the analysis of \cite{MS_Ent20} to differential R\'{e}nyi
entropies of an arbitrary positive order $\alpha$ (recall that the
differential R\'{e}nyi entropy is specialized to the differential Shannon
entropy at $\alpha=1$ \cite{Renyientropy}). We show that, for the
generalized multivariate Cauchy density, the differential R\'{e}nyi
entropy can be presented as a two--dimensional integral, rather than
an $n$--dimensional integral. Defining
\begin{align}
\label{eq:Z}
Z(t)\dfn\int_{-\infty}^{\infty}e^{-tg(x)} \, \mathrm{d}x, \quad t>0,
\end{align}
we get from \eqref{pdf gen. Cauchy} (see \cite[Section~3.1]{MS_Ent20}) that
\begin{align}
\label{C_n}
C_n=\frac{\Gamma(q)}{\displaystyle \int_0^\infty t^{q-1}e^{-t}Z^n(t) \, \mathrm{d}t}.
\end{align}
For $g(x)=|x|^\theta$, with a fixed $\theta > 0$,
\eqref{eq:Z} implies that
\begin{align} \label{eq:spec. Z}
Z(t)=\frac{2 \, \Gamma(1/\theta)}{\theta \, t^{1/\theta}}.
\end{align}
In particular, for $\theta=2$ and $q=\tfrac12 (n+1)$, we get
the multivariate Cauchy density from \eqref{pdf gen. Cauchy}.
In this case,
it follows from \eqref{eq:spec. Z} that $Z(t)=\sqrt{\frac{\pi}{t}}$
for $t>0$, and from \eqref{C_n}
\begin{align}
\label{cauchynormalization}
C_n=\frac{\Gamma\left(\frac{n+1}{2}\right)}{\pi^{(n+1)/2}}.
\end{align}

For $\alpha \in (0,1) \cup (1, \infty)$, the (differential) R\'{e}nyi
entropy of order $\alpha$ is given by
\begin{align}
h_{\alpha}(X^n) & \dfn \frac1{1-\alpha} \; \log \int_{\reals^n}
f^{\alpha}(x^n) \, \mathrm{d}x^n \nonumber \\[0.1cm]
\label{Renyi ent.}
& \, = \frac1{1-\alpha} \, \log \bE\bigl[ f^{\alpha-1}(X^n) \bigr].
\end{align}
Using the Laplace transform relation,
\begin{align} \label{Laplace}
\frac{1}{s^q}=\frac{1}{\Gamma(q)}\int_0^\infty
t^{q-1}e^{-st} \, \mathrm{d}t, \quad \forall \, q>0, \; \mathrm{Re}(s) > 0,
\end{align}
we obtain that, for $\alpha > 1$
(see Appendix~\ref{appendix: generalized multivariate Cauchy}),
\begin{align}
h_{\alpha}(X^n)
&= \frac{\alpha}{\alpha-1} \, \log \int_0^{\infty} t^{q-1} e^{-t} Z^n(t) \, \mathrm{d}t
+ \frac{\log \Gamma\bigl(q(\alpha-1)\bigr)}{\alpha-1} - \log \Gamma(q) \nonumber \\[0.1cm]
\label{20200125a1}
& \hspace*{0.4cm} - \frac1{\alpha-1} \, \log \int_0^{\infty} \int_0^{\infty} t^{q(\alpha-1)-1} u^{q-1} e^{-(t+u)}
\, Z^n(t+u) \, \mathrm{d}u \, \mathrm{d}t.
\end{align}
Otherwise, if $\alpha \in (0,1)$,
we distinguish between the following two cases:
\begin{enumerate}[1)]
\item If $\alpha = 1-\frac{m}{q}$ for some $m \in \{1, \ldots, q-1\}$, then
\begin{align}
h_{\alpha}(X^n) &= \frac{\alpha}{1-\alpha} \, \log C_n - \frac1{1-\alpha} \, \log \Gamma(q) \nonumber \\
\label{20200212a}
& \hspace*{0.5cm} + \frac1{1-\alpha} \, \log \Biggl( \sum_{\ell=0}^m \biggl\{ (-1)^{m-\ell} \int_0^\infty t^{q-1}
e^{-t} \varphi_n^{(\ell)}(t) \, \mathrm{d}t \biggr\} \Biggr),
\end{align}
with
\begin{align}
\label{20200212b}
\varphi_n(t) \dfn Z^n(t), \quad \forall \, t \geq 0.
\end{align}
\item Otherwise (i.e., if $\rho \dfn q(1-\alpha) \notin \naturals$), then
\begin{align}
& h_{\alpha}(X^n) \nonumber \\
&= -\log C_n + \frac1{1-\alpha} \, \log \Biggl(
\frac1{1+\rho} \, \sum_{\ell=0}^{\lfloor \rho \rfloor}
\frac{\beta_\ell(n)}{B(\ell+1,\rho+1-\ell)} \nonumber \\
\label{20200212c}
&\hspace*{2.8cm} + \frac{\rho \, \sin(\pi \rho) \, \Gamma(\rho)}{\pi}
\int_0^\infty \frac{e^{-u}}{u^{\rho+1}} \,
\Biggl( \, \sum_{j=0}^{\lfloor \rho \rfloor}
\biggl \{ \frac{(-1)^j \, \beta_j(n)}{j!} \; u^j \biggr\} \\
&\hspace*{7.4cm} - \frac{C_n}{\Gamma(q)}
\int_0^\infty t^{q-1} e^{-t} Z^n(t+u) \, \mathrm{d}t \Biggr) \Biggr), \nonumber
\end{align}
where $\beta_0 \dfn 1$, and for all $j \in \naturals$
\begin{align}
\label{20200212d}
\hspace*{-0.3cm} \beta_j(n) \dfn \frac{C_n}{\Gamma(q)} \sum_{\ell = 0}^j \Biggl\{
\frac{(-1)^{j-\ell}}{B(\ell+1,j-\ell+1)} \sum_{k=0}^{\ell} \biggl\{ (-1)^{\ell-k}
\binom{\ell}{k} \int_0^\infty t^{q-1} e^{-t} \varphi_n^{(k)}(t) \, \mathrm{d}t \biggr\} \Biggr\}.
\end{align}
\end{enumerate}
The proof of the integral expressions of the R\'{e}nyi entropy of order
$\alpha \in (0,1)$, as given in \eqref{20200125a1}--\eqref{20200212d}, is
provided in Appendix~\ref{appendix: generalized multivariate Cauchy}.

Once again, the advantage of these expressions, which do not seem to be very simple (at
least on the face of it), is that they only involve one-- or two--dimensional
integrals, rather than an expression of an $n$--dimensional integral (as it
could have been in the case of an $n$--dimensional density).

\subsection{Mutual Information Calculations for Communication Channels with Jamming}
\label{subsec:jamming}

Consider a channel that is fed by an input vector
$X^n = (X_1, \ldots, X_n) \in \cX^n$ and generates
an output vector $Y^n = (Y_1, \ldots, Y_n) \in \cY^n$,
where $\cX$ and $\cY$ are either finite, countably infinite
or continuous alphabets, and $\cX^n$ and $\cY^n$
are their $n$-th order Cartesian powers. Let the
conditional probability distribution of the channel be given by
\begin{align} \label{transition law - jamming}
p_{Y^n|X^n}(y^n | x^n)
= \frac1n \sum_{i=1}^n \Biggl\{ \prod_{j \neq i} q_{Y|X}(y_j | x_j)
\; r_{Y|X}(y_i|x_i) \Biggr\},
\end{align}
where $r_{Y|X}(\cdot | \cdot)$ and $q_{Y|X}(\cdot|\cdot)$
are given conditional probability distributions of $Y$ given $X$,
$x^n = (x_1, \ldots, x_n) \in \cX^n$ and
$y^n = (y_1, \ldots, y_n) \in \cY^n$.
This channel model refers to a discrete memoryless channel
(DMC), which is nominally given by
\begin{align}
\label{memoryless stationary channel}
q_{Y^n|X^n}(y^n | x^n) = \prod_{i=1}^n q_{Y|X}(y_i | x_i),
\end{align}
where one of the transmitted symbols is jammed at a uniformly distributed
random time, $i$, and the transition distribution of the jammed symbol
is given by $r_{Y|X}(y_i|x_i)$ instead of $q_{Y|X}(y_i|x_i)$.
The restriction to a single jammed symbol is made merely
for the sake of simplicity, but it can easily be extended.

We wish to evaluate how the jamming affects the mutual information
$I(X^n; Y^n)$. Clearly, when one talks about jamming, the mutual information is decreased,
but this is not part of the mathematical model, where the relation between
$r$ and $q$ has not been specified. Let the input distribution be given by
the product form
\begin{align}
\label{input distribution}
p_{X^n}(x^n) = \prod_{i=1}^n p_X(x_i), \quad x^n \in \cX^n.
\end{align}
The mutual information (in nats) is given by
\begin{align}
& I(X^n; Y^n) \nonumber \\
\label{MI}
& = h(Y^n) - h(Y^n | X^n) \\
\label{sum integrals}
& = \int_{\cX^n \times \cY^n} p_{X^n, Y^n}(x^n, y^n)
\, \ln p_{Y^n|X^n}(y^n | x^n) \, \mathrm{d}x^n \, \mathrm{d}y^n
- \int_{\cY^n} p_{Y^n}(y^n) \,
\ln p_{Y^n}(y^n) \, \mathrm{d}y^n.
\end{align}
For simplicity of notation, we henceforth omit the domains of integration
whenever they are clear from the context. We have,
\begin{align}
& \int p_{X^n, Y^n}(x^n, y^n)
\, \ln p_{Y^n|X^n}(y^n | x^n) \, \mathrm{d}x^n \, \mathrm{d}y^n \nonumber \\
&= \int p_{X^n, Y^n}(x^n, y^n)
\, \ln \biggl( \frac{p_{Y^n|X^n}(y^n | x^n)}{q_{Y^n|X^n}(y^n | x^n)} \biggr)
\; \mathrm{d}x^n \, \mathrm{d}y^n \nonumber \\
\label{sum integrals 2}
&\hspace*{0.3cm} + \int p_{X^n, Y^n}(x^n, y^n)
\, \ln q_{Y^n|X^n}(y^n | x^n) \, \mathrm{d}x^n \, \mathrm{d}y^n.
\end{align}
By using the logarithmic expectation in \eqref{log. expectation},
and the following equality (see \eqref{transition law - jamming}
and \eqref{memoryless stationary channel}):
\begin{align}
\label{p_n}
\frac{p_{Y^n|X^n}(y^n | x^n)}{q_{Y^n|X^n}(y^n | x^n)} = \frac1n
\sum_{i=1}^n \frac{r_{Y|X}(y_i | x_i)}{q_{Y|X}(y_i | x_i)},
\end{align}
we obtain (see Appendix~\ref{appendix: jamming}.1)
\begin{align}
& \int p_{X^n, Y^n}(x^n, y^n)
\, \ln \biggl( \frac{p_{Y^n|X^n}(y^n | x^n)}{q_{Y^n|X^n}(y^n | x^n)} \biggr)
\; \mathrm{d}x^n \, \mathrm{d}y^n \nonumber \\
\label{J1}
&= \int_0^\infty \frac{1}{u} \, \Bigl[ e^{-u} - f^{n-1}\Bigl(\frac{u}{n}\Bigr) \,
g\Bigl(\frac{u}{n}\Bigr) \Bigr] \, \mathrm{d}u,
\end{align}
where, for $u \geq 0$,
\begin{align}
\label{f}
f(u) := \int p_X(x) \, q_{Y|X}(y|x) \, \exp\biggl(-\frac{u \, r_{Y|X}(y|x)}{q_{Y|X}(y|x)}\biggr) \, \mathrm{d}x \, \mathrm{d}y, \\
\label{g}
g(u) := \int p_X(x) \, r_{Y|X}(y|x) \, \exp\biggl(-\frac{u \, r_{Y|X}(y|x)}{q_{Y|X}(y|x)}\biggr) \, \mathrm{d}x \, \mathrm{d}y.
\end{align}
Moreover, owing to the product form of $q_n$,
it is shown in Appendix~\ref{appendix: jamming}.2 that
\begin{align}
& \int p_{X^n, Y^n}(x^n, y^n) \, \ln q_{Y^n|X^n}(y^n | x^n) \, \mathrm{d}x^n \, \mathrm{d}y^n \nonumber \\
\label{J2}
&= \int p_X(x) \, r_{Y|X}(y|x) \, \ln q_{Y|X}(y|x) \, \mathrm{d}x \, \mathrm{d}y  \nonumber \\
& \hspace*{0.3cm} + (n-1) \int p_X(x) \, q_{Y|X}(y|x) \, \ln q_{Y|X}(y|x) \, \mathrm{d}x \, \mathrm{d}y.
\end{align}
Combining \eqref{sum integrals 2}, \eqref{J1} and \eqref{J2},  we
express $h(Y^n | X^n)$ as a double integral over $\cX \times \cY$,
independently of $n$ (rather than an integration over $\cX^n \times \cY^n$):
\begin{align}
h(Y^n | X^n)
&= \int_0^\infty \frac{1}{u} \, \Bigl[ f^{n-1}\Bigl(\frac{u}{n}\Bigr) \,
g\Bigl(\frac{u}{n}\Bigr) - e^{-u} \Bigr] \, \mathrm{d}u \nonumber \\
& \hspace*{0.4cm} - \int p_X(x) \, r_{Y|X}(y|x) \, \ln q_{Y|X}(y|x) \, \mathrm{d}x \, \mathrm{d}y \nonumber \\
\label{20200108a3}
& \hspace*{0.4cm} -(n-1) \int p_X(x) \, q_{Y|X}(y|x) \, \ln q_{Y|X}(y|x) \, \mathrm{d}x \, \mathrm{d}y.
\end{align}

We next calculate the differential channel output entropy,
$h(Y^n)$, induced by $p_{Y^n|X^n}(\cdot|\cdot)$.
From Appendix~\ref{appendix: jamming}.3,
\begin{align}
\label{output dist.}
p_{Y^n}(y^n) = \prod_{j=1}^n v(y_j) \cdot \frac1n \sum_{i=1}^n \frac{w(y_i)}{v(y_i)},
\end{align}
where, for all $y \in \cY$,
\begin{align}
\label{v}
v(y) := \int q_{Y|X}(y|x) \, p_X(x) \, \mathrm{d}x, \\
\label{w}
w(y) := \int r_{Y|X}(y|x) \, p_X(x) \, \mathrm{d}x.
\end{align}
By \eqref{ir}, the following identity holds for every positive random variable $Z$
(see Appendix~\ref{appendix: jamming}.3):
\begin{align}
\label{eq: mean of Z ln Z}
\bE\{Z \ln Z\} = \int_0^{\infty} \frac1{u} \, \Bigl[ M'_Z(0) \, e^{-u} - M'_Z(-u) \Bigr] \, \mathrm{d}u
\end{align}
where $M_Z(u) := \bE\{e^{uZ}\}$.
By setting $Z := \frac1n \sum_{i=1}^n \frac{w(V_i)}{v(V_i)}$
where $\{V_i\}_{i=1}^n$ are i.i.d. random variables with the density function $v$,
some algebraic manipulations give (see Appendix~\ref{appendix: jamming}.3)
\begin{align}
h(Y^n) &= \int_0^\infty \frac1{u} \Bigl[ t^{n-1}\Bigl(\frac{u}{n}\Bigr) \, s\Bigl(\frac{u}{n}\Bigr) - e^{-u} \Bigr]
\, \mathrm{d}u \nonumber \\
\label{entropy of Y vec}
& \hspace*{0.5cm} - \int w(y) \, \ln v(y) \, \mathrm{d}y - (n-1) \int v(y) \ln v(y) \, \mathrm{d}y,
\end{align}
where
\begin{align}
\label{s}
s(u) := \int w(y) \, \exp\biggl(-\frac{u \, w(y)}{v(y)}\biggr) \, \mathrm{d}y, \quad u \geq 0, \\
\label{t}
t(u) := \int v(y) \, \exp\biggl(-\frac{u \, w(y)}{v(y)}\biggr) \, \mathrm{d}y, \quad u \geq 0.
\end{align}
Combining \eqref{MI}, \eqref{20200108a3} and \eqref{entropy of Y vec},
we obtain the mutual information for the channel with jamming, which is given by
\begin{align}
I_p(X^n; Y^n) &= \int_0^\infty \frac1{u} \Bigl[ t^{n-1}\Bigl(\frac{u}{n}\Bigr) \, s\Bigl(\frac{u}{n}\Bigr)
- f^{n-1}\Bigl(\frac{u}{n}\Bigr) \, g\Bigl(\frac{u}{n}\Bigr) \Bigr] \, \mathrm{d}u \nonumber \\
& \hspace*{0.5cm} + \int p_X(x) \, r_{Y|X}(y|x) \, \ln q_{Y|X}(y|x) \, \mathrm{d}x \, \mathrm{d}y - \int w(y) \, \ln v(y) \, \mathrm{d}y \nonumber \\
& \hspace*{0.5cm} + (n-1) \biggl[ \int p_X(x) \, q_{Y|X}(y|x) \, \ln q_{Y|X}(y|x) \, \mathrm{d}x \, \mathrm{d}y - \int v(y) \ln v(y) \, \mathrm{d}y \biggr].
\end{align}

We next exemplify our results in the case where $q$ is a binary symmetric channel (BSC)
with crossover probability $\delta \in (0, \tfrac12)$, and $p$ is a BSC with a larger
crossover probability, $\varepsilon \in (\delta, \tfrac12]$.
We assume that the input bits are i.i.d.\ and equiprobable.
The specialization of our analysis to this setup is provided in Appendix~\ref{appendix: jamming}.4,
showing that the mutual information of the channel $p_{X^n, Y^n}$, fed by the binary
symmetric source, is given by
\begin{align}
\label{eq: MI BSC and jamming}
I_{p}(X^n; Y^n) &= n \ln 2 - d(\varepsilon \| \delta)
- H_{\mathrm{b}}(\varepsilon) - (n-1) H_{\mathrm{b}}(\delta) \\
& \hspace*{0.4cm} + \int_0^\infty \biggl\{ e^{-u} -
\biggl[ (1-\delta) \, \exp\biggl(-\frac{(1-\varepsilon)u}{(1-\delta)n}\biggr)
+ \delta \, \exp\biggl(-\frac{\varepsilon u}{\delta n}\biggr) \biggr]^{n-1} \nonumber \\[0.1cm]
& \hspace*{2.7cm} \cdot \biggl[ (1-\varepsilon) \, \exp\biggl(-\frac{(1-\varepsilon)u}{(1-\delta)n}\biggr)
+ \varepsilon \, \exp\biggl(-\frac{\varepsilon u}{\delta n}\biggr) \biggr] \biggr\}
\, \frac{\mathrm{d}u}{u},  \nonumber
\end{align}
where $H_{\mathrm{b}} \colon [0,1] \to [0, \ln 2]$ is the binary entropy function
\begin{align}
H_{\mathrm{b}}(x) := -x \ln(x) - (1-x) \ln(1-x), \quad x \in [0,1]
\end{align}
with the convention that $0 \ln 0 = 0$, and
\begin{align}
d(\varepsilon \| \delta) \dfn \varepsilon \, \ln \biggl( \frac{\varepsilon}{\delta} \biggr)
+ (1-\varepsilon) \, \ln \biggl( \frac{1-\varepsilon}{1-\delta} \biggr), \quad (\delta, \varepsilon) \in [0,1]^2
\end{align}
denotes the binary relative entropy.
By the data processing inequality, the mutual
information in \eqref{eq: MI BSC and jamming}
is smaller than that of the BSC with crossover probability~$\delta$:
\begin{align}
\label{eq: MI BSC and no jamming}
I_q(X^n; Y^n) &= n \bigl(\ln 2 - H_{\mathrm{b}}(\delta) \bigr).
\end{align}
Fig.~\ref{jamming1} refers to the case where $\delta = 10^{-3}$  and $n=128$.
Here $I_{q}(X^n; Y^n) = 87.71~\mathrm{nats}$, and
$I_p(X^n; Y^n)$ is decreased by 2.88~nats
due to the jammer (see Fig.~\ref{jamming1}).
\begin{figure}[hbt]
\centering
\vspace*{-4.7cm}
\includegraphics[width=11cm]{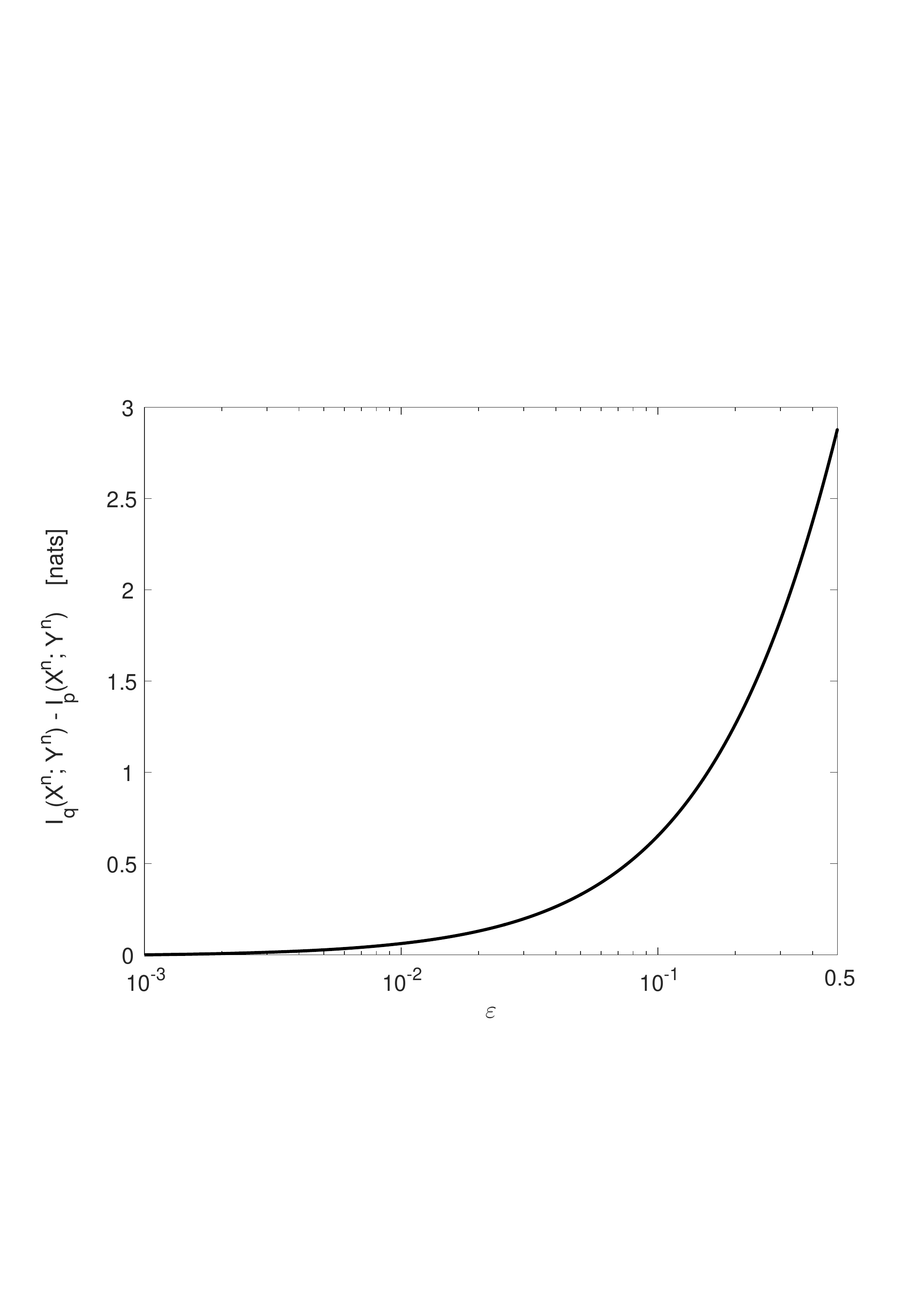}
\vspace*{-4cm}
\caption{The degradation in mutual information for $n=128$. The
jammer--free channel $q_{Y|X}$ is a BSC with crossover probability
$\delta = 10^{-3}$, and $r_{Y|X}$ for the jammed symbol is a BSC
with crossover probability $\varepsilon \in \bigl(\delta, \tfrac12 \bigr]$.
The input bits are i.i.d. and equiprobable. The degradation in
$I(X^n;Y^n)$ (nats) is displayed as a function of $\varepsilon$.}
\label{jamming1}
\end{figure}

Fig.~\ref{jamming2} refers to the case where $\delta = 10^{-3}$  and
$\varepsilon = \tfrac12$ (referring to complete
jamming of a single symbol which is chosen uniformly at random),
and it shows the difference in the mutual information $I(X^n; Y^n)$,
as a function of the length $n$, between the jamming-free BSC with
crossover probability $\delta$, and the channel with jamming.
\begin{figure}[hbt]
\centering
\vspace*{-4.7cm}
\includegraphics[width=11cm]{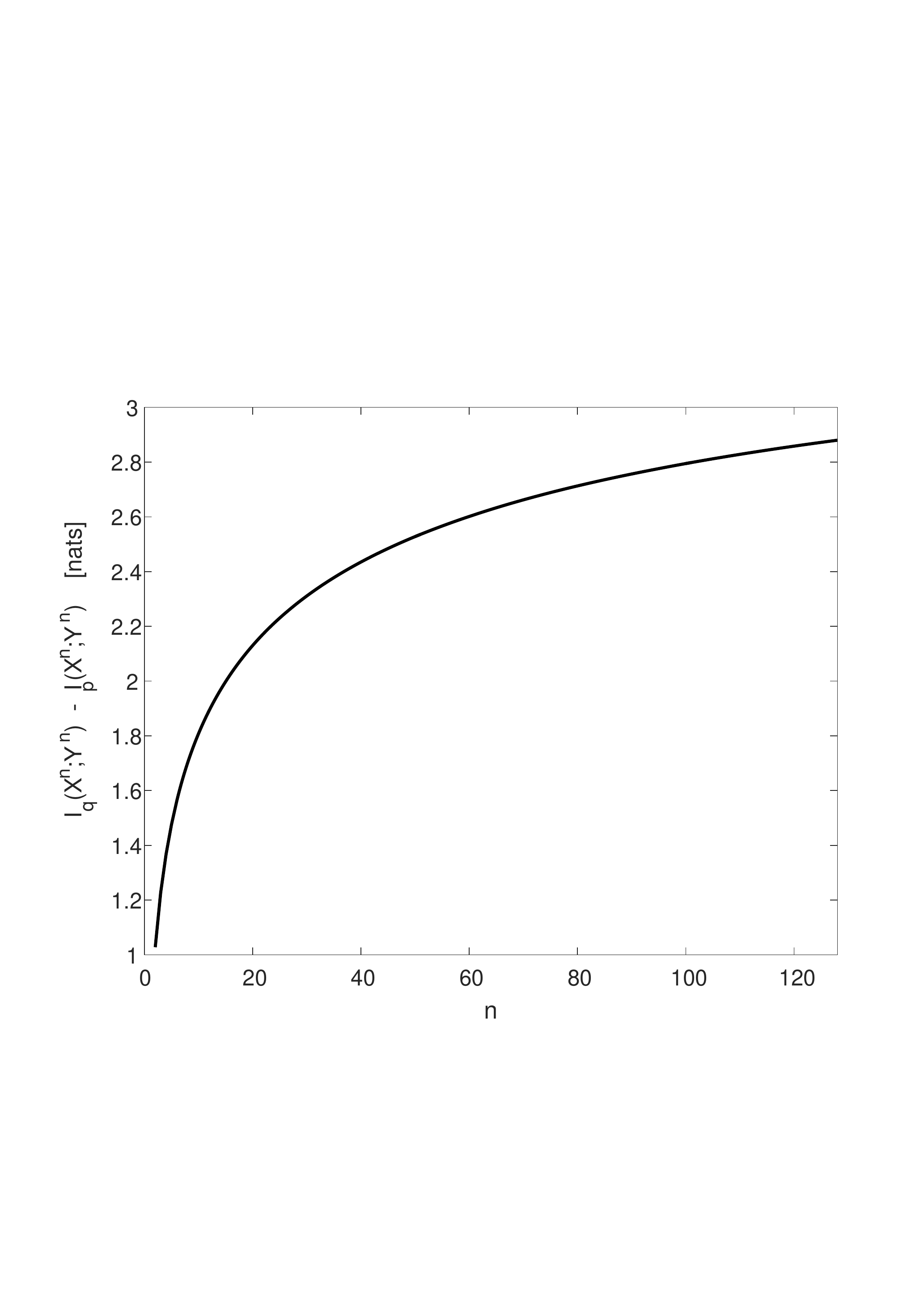}
\vspace*{-4cm}
\caption{The degradation in mutual information as a function of $n$.
The jammer--free channel $q_{Y|X}$ is a BSC with crossover probability
$\delta = 10^{-3}$, and $r_{Y|X}$ for the jammed symbol is a BSC with
crossover probability $\varepsilon = \tfrac12$.
The input bits are i.i.d. and equiprobable.}
\label{jamming2}
\end{figure}

To conclude, this subsection studies the change in the mutual information
$I(X^n; Y^n)$ due to jamming, relative to the mutual information associated
with the nominal channel without jamming. Due to the integral representations
provided in our analysis, the calculation of the mutual information finally
depends on one--dimensional integrals, as opposed to the original $n$-dimensional
integrals, pertaining to the expressions that define the associated differential
entropies.

\appendices
\renewcommand{\thesubsection}{\Alph{section}.\arabic{subsection}}

\section{Proof of Theorem~\ref{thm: rho moment}}
\label{appendix: theorem - rho moment}

\setcounter{equation}{0}
\renewcommand{\theequation}{\thesection.\arabic{equation}}

Let $\rho>0$ be a non-integer real, and define the function
$F_\rho \colon (0, \infty) \to \reals$  as follows:
\begin{align}
\label{eq: auxiliary function}
F_\rho(\mu) \dfn \int_0^\infty \frac1{u^{\rho+1}} \, \Biggl( e^{-\mu u}
- \sum_{j=0}^{\lfloor \rho \rfloor} \biggl\{ \frac{(-1)^j}{j!} \, (\mu-1)^j
u^j \biggr\} \, e^{-u} \Biggr) \, \mathrm{d}u, \quad \mu > 0,
\end{align}
with the convention that $0^0 := \underset{x \to 0^+}{\lim} x^x = 1$.
By the Taylor series expansion of $e^{-\mu u}$ as a function of $\mu$
around $\mu=1$, we find that for small positive $u$, the integrand
of \eqref{eq: auxiliary function} scales like
$u^{-(\rho - \lfloor \rho \rfloor)}$ with
$\rho - \lfloor \rho \rfloor \in (0,1)$. Furthermore, for large
$u$, the same integrand scales like $u^{-(\rho+1)} e^{-\min \{ \mu, 1 \} u}$.
This guarantees the convergence of the integral, and so $F_\rho(\cdot)$ is
well-defined and finite in the interval $(0, \infty)$.

From \eqref{eq: auxiliary function}, $F_{\rho}(1) = 0$ (for $\mu=1$, the
integrand of \eqref{eq: auxiliary function} is identically
zero on $(0, \infty)$). Differentiation $\ell$ times with respect to
$\mu$, under the integration sign with
$\ell \in \bigl\{0, \ldots, \lfloor \rho \rfloor \bigr\}$, gives
\begin{align}
\label{eq: first derivatives}
F_\rho^{(\ell)}(\mu) = \int_0^\infty \frac1{u^{\rho+1}}
\Biggl[ (-1)^\ell u^\ell e^{-\mu u} - \sum_{j=\ell}^{\lfloor \rho \rfloor}
\biggl\{ \frac{(-1)^j}{(j-\ell)!} \cdot (\mu-1)^{j-\ell} u^j \biggr\} \, e^{-u}
\Biggr] \, \mathrm{d}u,
\end{align}
which implies that
\begin{align}
\label{eq: first derivatives at 1}
F_{\rho}^{(\ell)}(1) = 0, \qquad \ell = 0, \ldots, \lfloor \rho \rfloor.
\end{align}
We next calculate $F_{\rho}^{(k)}(\mu)$ for $k \dfn \lfloor \rho \rfloor + 1$
and $\mu>0$:
\begin{align}
F_{\rho}^{(k)}(\mu) &= \int_0^\infty \frac1{u^{\rho+1}} \,
\frac{\partial^k}{\partial \mu^k} \Biggl\{ e^{-\mu u}
- \sum_{j=0}^{\lfloor \rho \rfloor} \biggl\{ \frac{(-1)^j}{j!} \cdot
(\mu-1)^j u^j \biggr\} \, e^{-u} \Biggr\} \, \mathrm{d}u \nonumber \\
&= \int_0^\infty \frac{(-u)^k e^{-\mu u}}{u^{\rho+1}} \; \mathrm{d}u \nonumber \\
&= (-1)^k \int_0^\infty u^{k-\rho-1} e^{-\mu u} \, \mathrm{d}u \nonumber \\
&= (-1)^k \int_0^\infty \Bigl(\frac{t}{\mu}\Bigr)^{k-\rho-1} \, e^{-t} \, \mu^{-1}
\, \mathrm{d}t \nonumber \\
\label{eq: k-th derivative}
&= (-1)^k \mu^{\rho-k} \, \Gamma(k-\rho).
\end{align}
Hence, from \eqref{eq: first derivatives at 1} and \eqref{eq: k-th derivative},
\begin{align}
\label{0120a}
& F_{\rho}(1) = \ldots = F_{\rho}^{(\lfloor \rho \rfloor)}(1) = 0, \\
\label{0120b}
& F_{\rho}^{(k)}(\mu) = (-1)^k \mu^{\rho-k} \, \Gamma(k-\rho),
\qquad k \dfn \lfloor \rho \rfloor + 1, \; \mu > 0.
\end{align}
By integrating both sides of \eqref{0120b} with respect to $\mu$,
successively $k$ times, \eqref{0120a} implies that
\begin{align}
\label{0120c}
F_\rho(\mu) = \frac{(-1)^k \; \Gamma(k-\rho) \; \mu^\rho}{ \underset{i=0}{\overset{k-1}{\prod}} (\rho-i)}
+ \sum_{i=0}^{k-1} c_i(\rho) \, (\mu-1)^i, \qquad k \dfn \lfloor \rho \rfloor + 1, \; \mu>0,
\end{align}
with some integration constants $\bigl\{c_i(\rho)\bigr\}_{i=0}^{k-1}$.
Since $F_{\rho}(1)=0$ (see \eqref{0120a}), \eqref{0120c} implies that
\begin{align}
\label{eq: c0}
c_0(\rho) = \frac{(-1)^{k+1} \; \Gamma(k-\rho)}{ \underset{i=0}{\overset{k-1}{\prod}} (\rho-i)},
\end{align}
and since (by assumption) $\rho$ is a non-integer, the denominator
on the right--hand side of \eqref{eq: c0} is non-zero.
Moreover, since $F_{\rho}^{(\ell)}(1)=0$ for all $\ell \in \{1, \ldots k-1\}$
(see \eqref{0120a}), differentiation of both
sides of \eqref{0120c} $\ell$ times at $\mu=1$ yields
\begin{align}
\label{eq: c_ell}
c_{\ell}(\rho) := \frac{(-1)^{k+1} \, \Gamma(k-\rho) \; \underset{i=0}{\overset{\ell-1}{\prod}}
(\rho-i)}{\ell! \, \underset{i=0}{\overset{k-1}{\prod}} (\rho-i)},
\qquad \ell = 1, \ldots, k-1.
\end{align}
Substituting \eqref{eq: c0} and \eqref{eq: c_ell} into \eqref{0120c} gives
\begin{align}
\label{eq: F as sum}
F_\rho(\mu) = \frac{(-1)^k \; \Gamma(k-\rho)}{ \underset{i=0}{\overset{k-1}{\prod}} (\rho-i)}
\left[ \mu^\rho - 1 - \sum_{\ell=1}^{k-1} \Biggl\{ \, \frac1{\ell!} \,
\prod_{i=0}^{\ell-1} (\rho-i) \; (\mu-1)^\ell \, \Biggr\} \right], \quad \mu > 0.
\end{align}
Combining \eqref{eq: auxiliary function} with \eqref{eq: F as sum} and
rearranging terms, we obtain
\begin{align}
\mu^\rho  &= 1 + \sum_{\ell=1}^{k-1} \Biggl\{ \frac1{\ell!} \,
\prod_{i=0}^{\ell-1} (\rho-i) \; (\mu-1)^\ell \, \Biggr\} \nonumber \\
\label{0120d}
& \hspace*{0.4cm} + \frac{(-1)^{k-1} \, \underset{i=0}{\overset{k-1}{\prod}}
(\rho-i)}{\Gamma(k-\rho)} \int_0^\infty \frac1{u^{\rho+1}} \, \Biggl( \,
\sum_{j=0}^{\lfloor \rho \rfloor} \biggl\{ \frac{(-1)^j}{j!} \,
(\mu-1)^j u^j \biggr\} \, e^{-u} - e^{-\mu u} \Biggr) \, \mathrm{d}u.
\end{align}
Setting $\mu := X \geq 0$, and taking expectations of both sides of \eqref{0120d}
yield (see \eqref{eq: MGF} and \eqref{eq: alpha_k})
\begin{align}
\bE \bigl\{X^\rho\bigr\}
&= 1 + \sum_{\ell=1}^{k-1} \Biggl\{ \, \frac1{\ell!} \,
\prod_{i=0}^{\ell-1} (\rho-i) \; \alpha_\ell \, \Biggr\} \nonumber \\
\label{0120e}
& \hspace*{0.4cm} + \frac{(-1)^{k-1} \, \underset{i=0}{\overset{k-1}{\prod}}
(\rho-i)}{\Gamma(k-\rho)} \int_0^\infty \frac1{u^{\rho+1}} \, \Biggl( \,
\sum_{j=0}^{\lfloor \rho \rfloor} \biggl\{ \frac{(-1)^j \, \alpha_j}{j!} \;
u^j \biggr\} \, e^{-u} - M_X(-u) \Biggr) \, \mathrm{d}u.
\end{align}
We next rewrite and simplify both terms in the right side of \eqref{0120e} as follows:
\begin{align}
1 + \sum_{\ell=1}^{k-1} \Biggl\{ \, \frac1{\ell!} \, \prod_{i=0}^{\ell-1} (\rho-i)
\; \alpha_\ell \, \Biggr\}
\label{0120f}
&= 1 + \sum_{\ell=1}^{k-1} \Biggl\{ \, \frac1{\Gamma(\ell+1)} \;
\frac{\Gamma(\rho+1)}{\Gamma(\rho-\ell+1)} \cdot \alpha_\ell \, \Biggr\} \\[0.05cm]
\label{0120f.2}
&= 1 + \frac1{1+\rho} \; \sum_{\ell=1}^{k-1} \Biggl\{ \, \frac1{\Gamma(\ell+1)} \;
\frac{\Gamma(\rho+2)}{\Gamma(\rho-\ell+1)} \cdot \alpha_\ell \, \Biggr\} \\[0.05cm]
\label{0120g}
&= 1 + \frac1{1+\rho} \; \sum_{\ell=1}^{k-1} \frac{\alpha_\ell}{B(\ell+1, \rho-\ell+1)} \\[0.05cm]
\label{0120g2}
&= \frac1{1+\rho} \; \sum_{\ell=0}^{k-1} \frac{\alpha_\ell}{B(\ell+1, \rho-\ell+1)},
\end{align}
and
\begin{align}
\frac{(-1)^{k-1} \, \underset{i=0}{\overset{k-1}{\prod}} (\rho-i)}{\Gamma(k-\rho)}
\label{0120h}
&= \frac{(-1)^{k-1} \, \Gamma(\rho+1)}{\Gamma(k-\rho) \, \Gamma(\rho-k+1)} \\[0.05cm]
\label{0120i}
&= (-1)^{k-1} \, \Gamma(\rho+1) \cdot \frac{\sin \bigl(\pi (k-\rho) \bigr)}{\pi} \\[0.05cm]
\label{0120j}
&= \frac{\rho \, \sin(\pi \rho) \, \Gamma(\rho)}{\pi}.
\end{align}
Eqs.~\eqref{0120f}, \eqref{0120f.2}, \eqref{0120h} and \eqref{0120j} are based on the
recursion (see, e.g., \cite[page~904, Identity~(8.331)]{GR14})
\begin{align}
\label{eq: Gamma recursion}
\Gamma(x+1) = x \, \Gamma(x), \qquad x>0,
\end{align}
\eqref{0120g} relies on the relation between the Beta and Gamma functions in \eqref{eq1: Beta};
\eqref{0120g2} is based on the following equality (see \eqref{eq1: Beta}, \eqref{eq: Gamma recursion},
and recall that $\Gamma(1)=1$):
\begin{align}
B(1, \rho+1) = \frac{\Gamma(1) \, \Gamma(\rho+1)}{\Gamma(\rho+2)} = \frac1{\rho+1},
\end{align}
and, finally, \eqref{0120j} holds by using the identity (see, e.g., \cite[page~905, Identity~(8.334)]{GR14})
\begin{align}
\label{eq: Gamma identity 1}
& \Gamma(x) \, \Gamma(1-x) = \frac{\pi}{\sin(\pi x)}, \quad \forall \, x \in (0,1),
\end{align}
with $x: = k - \rho = \lfloor \rho \rfloor + 1 - \rho \in (0,1)$ (since,
by assumption, $\rho$ is a non-integer).
Combining \eqref{0120e}--\eqref{0120j} gives \eqref{eq: rho>0 non-int.} (recall
that $\alpha_0 \dfn 1$, and $ k - 1 \dfn \lfloor \rho \rfloor$ holds by \eqref{0120b}).

We finally prove \eqref{eq2: alpha_k}. By \eqref{eq: alpha_k}, for all
$j \in \naturals$,
\begin{align}
\alpha_j &= \bE \bigl\{(X-1)^j \bigr\} \nonumber \\
&= \sum_{\ell=0}^j (-1)^{j-\ell} \binom{j}{\ell} \, \bE\bigl\{ X^\ell \bigr\} \nonumber \\
&= \sum_{\ell=0}^j \frac{(-1)^{j-\ell} \; \Gamma(j+1) \; M_X^{(\ell)}(0)}{\Gamma(\ell+1) \, \Gamma(j-\ell+1)}
\nonumber \\
&= \frac1{j+1} \, \sum_{\ell=0}^j \frac{(-1)^{j-\ell} \; \Gamma(j+2) \; M_X^{(\ell)}(0)}{\Gamma(\ell+1)
\, \Gamma(j-\ell+1)} \nonumber \\
&= \frac1{j+1} \, \sum_{\ell=0}^j \frac{(-1)^{j-\ell} \, M_X^{(\ell)}(0)}{B(\ell+1, j-\ell+1)}.
\end{align}

\section{Complementary Details of the Analysis in Section~\ref{subsec: estimation}}
\label{appendix: estimation}

\setcounter{equation}{0}
\renewcommand{\theequation}{\thesection.\arabic{equation}}

\subsection*{{\em B.1} Proof of Eq.~\eqref{MGF D_n}}
%\label{appendix: estimation 1}

For all $u>0$,
\begin{align}
M_{D_n}(-u)
\label{20200110a7.0}
&= \bE \Bigl\{ \exp\Bigl(-u \bigl( \widehat{\theta}_n - \theta \bigr)^2 \Bigr) \Bigr\} \\[0.1cm]
&= \bE \biggl\{ \frac1{2 \sqrt{\pi u}} \int_{-\infty}^\infty e^{j \omega (\widehat{\theta}_n - \theta)}
\label{20200110a7}
\, e^{-\omega^2/(4u)} \, \mathrm{d}\omega \biggr\} \\[0.1cm]
\label{20200110a8}
&= \frac1{2 \sqrt{\pi u}} \int_{-\infty}^\infty e^{-j \omega \theta} \, \bE\bigl\{e^{j \omega \widehat{\theta}_n}
\bigr\} \, e^{-\omega^2/(4u)} \, \mathrm{d}\omega \\[0.1cm]
\label{20200110a9}
&= \frac1{2 \sqrt{\pi u}} \int_{-\infty}^\infty e^{-j \omega \theta} \; \bE\biggl\{\exp\Bigl(\frac{j \omega}{n} \,
\sum_{i=1}^n X_i \Bigr) \biggr\} \, e^{-\omega^2/(4u)} \, \mathrm{d}\omega \\[0.1cm]
\label{20200110a10}
&= \frac1{2 \sqrt{\pi u}} \int_{-\infty}^\infty e^{-j \omega \theta} \, \phi_X^{n}\biggl(\frac{\omega}{n}\biggr) \,
e^{-\omega^2/(4u)} \, \mathrm{d}\omega,
\end{align}
where \eqref{20200110a7.0} is \eqref{20200110a3};
\eqref{20200110a7} relies on \eqref{eq: Gaussian identity}; \eqref{20200110a8}
holds by interchanging expectation and integration; \eqref{20200110a9} is due to
\eqref{estimator}, and \eqref{20200110a10} holds by the assumption
that $X_1, \ldots, X_n$ are i.i.d.

\subsection*{{\em B.2} Derivation of the Upper bound in \eqref{UB - estimation}}
\label{appendix: estimation 2}

For all $\rho > 0$,
\begin{align}
\bE \bigl\{ \bigl| \widehat{\theta}_n - \theta \bigr|^{\rho} \bigr\}
&= \int_0^\infty \prob \bigl( \bigl| \widehat{\theta}_n - \theta \bigr|^{\rho}
\geq t \bigr) \, \mathrm{d}t \nonumber \\
&= \int_0^\infty \prob \bigl( \bigl| \widehat{\theta}_n - \theta \bigr|^{\rho}
\geq \varepsilon^\rho \bigr) \, \rho \, \varepsilon^{\rho-1} \, \mathrm{d}\varepsilon \nonumber \\
\label{20200113a1}
&= \int_0^\infty \prob \bigl( \bigl| \widehat{\theta}_n - \theta \bigr|
\geq \varepsilon \bigr) \, \rho \, \varepsilon^{\rho-1} \, \mathrm{d}\varepsilon.
\end{align}
We next use the Chernoff bound for upper bounding $\prob \bigl( \bigl|
\widehat{\theta}_n - \theta \bigr| \geq \varepsilon \bigr)$ for all $\varepsilon > 0$,
\begin{align}
\prob \bigl( \widehat{\theta}_n - \theta \geq \varepsilon \bigr)
&=\prob \Biggl( \sum_{i=1}^n (X_i - \theta) \geq n \varepsilon \Biggr) \nonumber \\
&\leq \inf_{s \geq 0} \, \Biggl\{ e^{-sn \varepsilon} \, \bE \biggl\{
\exp\biggl(s \sum_{i=1}^n (X_i - \theta)\biggr) \biggr\} \Biggr\} \nonumber \\
&= \inf_{s \geq 0} \, \biggl\{ e^{-sn \varepsilon} \, \prod_{i=1}^n
\bE \Bigl\{ e^{s (X_i - \theta)} \Bigr\} \biggr\} \nonumber \\
&= \inf_{s \geq 0} \, \biggl\{ e^{-sn \varepsilon} \, \Bigl( \theta \, e^{s(1-\theta)}
+ (1-\theta) \, e^{-s \theta} \Bigr)^n \biggr\} \nonumber \\
\label{20200113a2}
&= \inf_{s \geq 0} \, \Bigl\{ e^{-n s \varepsilon + n H_{\theta}(s)} \Bigr\}
\end{align}
with $\theta \in [0,1]$, and
\begin{align}
\label{function H_theta}
H_{\theta}(s) \dfn \ln \Bigl( \theta \, e^{s(1-\theta)} + (1-\theta) \,
e^{-s \theta} \Bigr), \quad s \geq 0.
\end{align}
We now use an upper bound on $H_{\theta}(s)$ for every $s \geq 0$.
By Theorem~3.2 and Lemma~3.3 in \cite{BK_ECP13} (see
also \cite[Lemma~2.4.6]{MRIS_FnT19}), we have
\begin{align}
\label{20200113a3}
H_{\theta}(s) \leq C(\theta) \, s^2
\end{align}
with
\begin{align}
\label{20200113a4}
C(\theta) \dfn
\begin{dcases}
\hspace*{0.6cm} 0,  \quad & \mbox{if} \; \theta = 0, \\
\frac{1-2\theta}{4 \ln \Bigl( \frac{1-\theta}{\theta} \Bigr)},
\quad & \mbox{if} \; \theta \in \bigl(0, \tfrac12 \bigr), \\
\tfrac12 \, \theta (1-\theta), \quad & \mbox{if} \;
\theta \in \bigl[ \tfrac12, 1 \bigr].
\end{dcases}
\end{align}
Combining \eqref{20200113a2} and \eqref{20200113a3} yields
\begin{align}
\prob \bigl( \widehat{\theta}_n - \theta \geq \varepsilon \bigr)
& \leq \inf_{s \geq 0} \, \Bigl\{ e^{-n \varepsilon s + n C(\theta) s^2} \Bigr\} \nonumber \\
&= \exp \biggl( -\frac{n \varepsilon^2}{4 C(\theta)} \biggr).
\end{align}
Similarly, it is easy to show that the same Chernoff bound applies also to
$\prob \bigl( \widehat{\theta}_n - \theta \leq -\varepsilon \bigr)$,
which overall gives
\begin{align}
\label{20200114a1}
\prob \bigl( \bigl| \widehat{\theta}_n - \theta \bigr| \geq \varepsilon \bigr)
\leq 2 \exp \biggl( -\frac{n \varepsilon^2}{4 C(\theta)} \biggr).
\end{align}
Inequality~\eqref{20200114a1} is a refined version of Hoeffding's inequality
(see \cite[Section~2.4.4]{MRIS_FnT19}), which is derived for the Bernoulli
distribution (see \eqref{20200113a2})
and by invoking the Chernoff bound; moreover, \eqref{20200114a1} coincides
with Hoeffding's inequality in the special case $\theta= \tfrac12$
(which, from \eqref{20200113a4}, yields $C(\theta) = \tfrac18$).
In view of the fact that \eqref{20200114a1} forms a specialization of
\cite[Theorem~2.4.7]{MRIS_FnT19}, it follows that the Bernoulli case
is the worst one (in the sense of leading to the looser upper bound)
among all probability distributions whose support is the interval $[0,1]$
and whose expected value is $\theta \in [0,1]$. However, in the Bernoulli case,
a simple symmetry argument applies for improving the bound \eqref{20200114a1} as follows.
Since $\{X_i\}$  are i.i.d., Bernoulli with mean $\theta$, then obviously, $\{1-X_i\}$
are Bernoulli, i.i.d.\ with mean $1-\theta$ and (from \eqref{estimator})
\begin{align}
\label{20200114a2}
\widehat{\theta}_n(1-X_1, \ldots, 1-X_n) = 1 - \widehat{\theta}_n(X_1, \ldots, X_n),
\end{align}
which implies that the error estimation is identical in both cases. Hence,
$\prob \bigl( \bigl| \widehat{\theta}_n - \theta \bigr| \geq \varepsilon \bigr)$
is symmetric around $\theta = \tfrac12$. It can be verified that
\begin{align}
\label{20200114a3}
\min \bigl\{ C(\theta), \, C(1-\theta) \bigr\} = \tfrac12 \, \theta (1-\theta),
\quad \forall \, \theta \in [0,1],
\end{align}
which follows from \eqref{20200113a4} and since $C(\theta) > C(1-\theta)$
for all $\theta \in (0, \tfrac12)$ (see \cite[Fig.~2.1]{MRIS_FnT19}).
In view of \eqref{20200114a3} and the above symmetry consideration,
the upper bound in \eqref{20200114a1} is improved for values
of $\theta \in (0, \tfrac12)$, which therefore gives
\begin{align}
\label{20200114a4}
\prob \bigl( \bigl| \widehat{\theta}_n - \theta \bigr| \geq \varepsilon \bigr)
\leq 2 \exp \biggl( -\frac{n \varepsilon^2}{2\theta(1-\theta)} \biggr), \quad
\forall \, \theta \in [0,1], \; \varepsilon > 0.
\end{align}
From \eqref{estimator}, the
probability in \eqref{20200114a4} vanishes if $\theta=0$
or $\theta=1$. Consequently, for $\rho > 0$,
\begin{align}
\label{20200114a5}
\bE \bigl\{ \bigl| \widehat{\theta}_n - \theta \bigr|^{\rho} \bigr\}
&= \int_0^\infty \prob \bigl( \bigl| \widehat{\theta}_n - \theta \bigr|
\geq \varepsilon \bigr) \, \rho \, \varepsilon^{\rho-1} \, \mathrm{d}\varepsilon \\
\label{20200114a6}
&\leq \int_0^\infty 2 \exp \biggl( -\frac{n \varepsilon^2}{2\theta(1-\theta)} \biggr) \,
\rho \, \varepsilon^{\rho-1} \, \mathrm{d}\varepsilon \\
\label{20200114a7}
&= \rho \, \bigl(2 \theta(1-\theta) \bigr)^{\rho/2}
\int_0^\infty u^{\rho/2-1} \, e^{-u} \, \mathrm{d}u \cdot n^{-\rho/2} \\
\label{20200114a8}
&= \rho \, \Gamma\Bigl(\frac{\rho}{2}\Bigr) \bigl(2 \theta(1-\theta) \bigr)^{\rho/2}
\cdot n^{-\rho/2} \\
\label{20200114a9}
&= K(\rho, \theta) \cdot n^{-\rho/2},
\end{align}
where \eqref{20200114a5}--\eqref{20200114a9} hold, respectively, due to \eqref{20200113a1},
\eqref{20200114a4}, the substitution $u \dfn \frac{n \varepsilon^2}{2 \theta (1-\theta)}$,
\eqref{eq: Gamma} and \eqref{K - UB}.

\section{Complementary Details of the Analysis in Section~\ref{subsec: RE Cauchy}}
\label{appendix: generalized multivariate Cauchy}

\setcounter{equation}{0}
\renewcommand{\theequation}{\thesection.\arabic{equation}}

We start by proving \eqref{20200125a1}.
In view of  \eqref{Renyi ent.}, for $\alpha \in (0,1) \cup (1, \infty)$
\begin{align}
\label{20200210}
h_{\alpha}(X^n) = \frac1{1-\alpha} \, \log \bE\bigl[ f^{\alpha-1}(X^n) \bigr],
\end{align}
where $X^n \dfn (X_1, \ldots, X_n)$. For $\alpha > 1$, we get
\begin{align}
& \bE \bigl[f^{\alpha-1}(X^n)\bigr] \nonumber \\
\label{20200211a}
&= C_n^{\alpha-1} \; \bE \Biggl\{ \biggl[ 1 + \sum_{i=1}^n g(X_i) \biggr]^{q(1-\alpha)} \Biggr\} \\
\label{20200211b}
&= C_n^{\alpha-1} \; \int_{\reals^n} f(x^n) \cdot \frac1{\Gamma\bigl(q (\alpha-1) \bigr)}
\int_0^{\infty} t^{q(\alpha-1)-1}
\exp\Biggl\{-\Biggl(1 + \sum_{i=1}^n g(x_i) \Biggr) t \Biggr\} \, \mathrm{d}t \\[0.1cm]
\label{20200211c}
&= \frac{C_n^{\alpha-1}}{\Gamma\bigl(q (\alpha-1) \bigr)} \int_0^\infty t^{q(\alpha-1)-1} \, e^{-t} \;
\bE\Biggl[ \exp \Biggl(-t \sum_{i=1}^n g(X_i) \Biggr) \Biggr] \, \mathrm{d}t.
\end{align}
where \eqref{20200211a} holds due to \eqref{pdf gen. Cauchy}; \eqref{20200211b}
follows from \eqref{Laplace}, and \eqref{20200211c} holds by swapping order of
integrations. Furthermore, from \eqref{pdf gen. Cauchy} and \eqref{Laplace},
\begin{align}
f(x^n) &= \frac{C_n}{ \Bigl(1 + \sum_{i=1}^n g(x_i) \Bigr)^q } \nonumber\\
\label{20200211d}
&= \frac{C_n}{\Gamma(q)} \int_0^\infty u^{q-1} e^{-u}
\exp \biggl(-u \sum_{i=1}^n g(x_i) \biggr) \mathrm{d}u,
\quad \forall \, x^n \in \reals^n,
\end{align}
and it follows from \eqref{20200211d} and by swapping order of integrations,
\begin{align}
& \bE\Biggl[ \exp \biggl(-t \sum_{i=1}^n g(X_i) \biggr) \Biggr] \nonumber \\
&= \frac{C_n}{\Gamma(q)} \int_0^\infty u^{q-1} e^{-u} \int_{\reals^n}
\exp \Bigl( -(t+u) \sum_{i=1}^n g(x_i) \Bigr) \mathrm{d}x^n \, \mathrm{d}u \nonumber \\
&= \frac{C_n}{\Gamma(q)} \int_0^\infty u^{q-1} e^{-u} \Biggl\{ \prod_{i=1}^n \int_{-\infty}^{\infty}
\exp\bigl(-(t+u) g(x_i) \bigr) \, \mathrm{d}x_i  \Biggr\} \mathrm{d}u \nonumber \\
&= \frac{C_n}{\Gamma(q)} \int_0^\infty u^{q-1} e^{-u} \biggl( \int_{-\infty}^{\infty}
\exp\bigl(-(t+u) \, g(x) \bigr) \, \mathrm{d}x \biggr)^n \, \mathrm{d}u \nonumber \\
\label{20200211e}
&= \frac{C_n}{\Gamma(q)} \int_0^\infty u^{q-1} e^{-u} Z^n(t+u) \, \mathrm{d}u
\end{align}
where \eqref{20200211e} holds by the definition of $Z(\cdot)$ in \eqref{eq:Z}.
Finally, combining \eqref{C_n}, \eqref{20200210}, \eqref{20200211c} and
\eqref{20200211e} gives \eqref{20200125a1}.

The proof of \eqref{20200212a}--\eqref{20200212d} is a straightforward calculation
which follows by combining \eqref{20200210}, \eqref{20200211a}, \eqref{20200211e}
and Theorem~\ref{thm: rho moment} (we replace $\{\alpha_j\}$ in Theorem~\ref{thm: rho moment}
with $\{\beta_j(n)\}$ in order not to confuse with the order $\alpha$ of the R\'{e}nyi entropy
of $X^n$).

\section{Calculations of the $n$-Dimensional Integrals in Section~\ref{subsec:jamming}}
\label{appendix: jamming}

\setcounter{equation}{0}
\renewcommand{\theequation}{\thesection.\arabic{equation}}

\subsection*{{\em D.1} Proof of Eqs.~\eqref{J1}--\eqref{g}}
%\label{appendix: jamming 1}

\vspace*{-0.8cm}
\begin{align}
& \int p_{X^n, Y^n}(x^n, y^n)
\, \ln \biggl( \frac{p_{Y^n|X^n}(y^n | x^n)}{q_{Y^n|X^n}(y^n | x^n)} \biggr)
\, \mathrm{d}x^n \, \mathrm{d}y^n \nonumber \\
& = \int p_{X^n, Y^n}(x^n, y^n)
\, \ln \Biggl( \frac1n \sum_{i=1}^n \frac{r_{Y|X}(y_i | x_i)}{q_{Y|X}(y_i | x_i)} \Biggr)
\, \mathrm{d}x^n \, \mathrm{d}y^n \nonumber \\
&= \int_0^\infty \frac{1}{u} \, \Biggl[ e^{-u} - \int p_{X^n,Y^n}(x^n,y^n)
\exp \Biggl(-\frac{u}{n} \sum_{i=1}^n
\frac{r_{Y|X}(y_i | x_i)}{q_{Y|X}(y_i | x_i)} \Biggr) \, \mathrm{d}x^n \,
\mathrm{d}y^n \Biggr] \, \mathrm{d}u \\
&= \int_0^\infty \frac{1}{u} \, \Biggl[ e^{-u} - \int
\frac1n \sum_{i=1}^n \Biggl\{ \prod_{j \neq i} q_{Y|X}(y_j | x_j) \, p_X(x_j)
\cdot r_{Y|X}(y_i | x_i) \, p_X(x_i) \Biggr\} \nonumber \\
& \hspace*{3.5cm} \cdot \exp \Biggl(-\frac{u}{n} \sum_{i=1}^n
\frac{r_{Y|X}(y_i | x_i)}{q_{Y|X}(y_i | x_i)} \Biggr) \, \mathrm{d}x^n \,
\mathrm{d}y^n \Biggr] \, \mathrm{d}u \\
&= \int_0^\infty \frac{1}{u} \, \Biggl[ e^{-u} - \int
\frac1n \sum_{i=1}^n \Biggl\{ \prod_{j \neq i} q_{Y|X}(y_j | x_j) \, p_X(x_j) \,
\exp \biggl(-\frac{u}{n} \, \frac{r_{Y|X}(y_j | x_j)}{q_{Y|X}(y_j | x_j)} \biggr) \nonumber \\
& \hspace*{4.5cm} \cdot r_{Y|X}(y_i | x_i) \, p_X(x_i) \exp \biggl(-\frac{u}{n} \,
\frac{r_{Y|X}(y_i | x_i)}{q_{Y|X}(y_i | x_i)} \biggr) \Biggr\} \, \mathrm{d}x^n \,
\mathrm{d}y^n \Biggr] \, \mathrm{d}u \\
&= \int_0^\infty \frac{1}{u} \, \Biggl[ e^{-u} -
\frac1n \sum_{i=1}^n \Biggl\{ \prod_{j \neq i} \int q_{Y|X}(y_j | x_j) \, p_X(x_j) \,
\exp \biggl(-\frac{u}{n} \, \frac{r_{Y|X}(y_j | x_j)}{q_{Y|X}(y_j | x_j)} \biggr) \,
\mathrm{d}x_j \, \mathrm{d}y_j \nonumber \\
& \hspace*{4.5cm} \cdot \int r_{Y|X}(y_i | x_i) \, p_X(x_i) \exp \biggl(-\frac{u}{n} \,
\frac{r_{Y|X}(y_i | x_i)}{q_{Y|X}(y_i | x_i)} \biggr) \, \mathrm{d}x_i \, \mathrm{d}y_i \, \Biggr\}
\Biggr] \, \mathrm{d}u \\
&= \int_0^\infty \frac{1}{u} \, \Biggl[ e^{-u} -
\frac1n \sum_{i=1}^n \Biggl\{ \Biggl( \int q_{Y|X}(y|x) \, p_X(x) \,
\exp \biggl(-\frac{u}{n} \, \frac{r_{Y|X}(y|x)}{q_{Y|X}(y|x)} \biggr) \,
\mathrm{d}x \, \mathrm{d}y \Biggr)^{n-1} \nonumber \\
& \hspace*{4.5cm} \cdot \int r_{Y|X}(y|x) \, p_X(x) \exp \biggl(-\frac{u}{n} \,
\frac{r_{Y|X}(y|x)}{q_{Y|X}(y|x)} \biggr) \, \mathrm{d}x \, \mathrm{d}y \, \Biggr\}
\Biggr] \, \mathrm{d}u \\
&= \int_0^\infty \frac{1}{u} \, \Biggl[ e^{-u} -
\Biggl( \int q_{Y|X}(y|x) \, p_X(x) \,
\exp \biggl(-\frac{u}{n} \, \frac{r_{Y|X}(y|x)}{q_{Y|X}(y|x)} \biggr) \,
\mathrm{d}x \, \mathrm{d}y \Biggr)^{n-1} \nonumber \\
& \hspace*{3cm} \cdot \int r_{Y|X}(y|x) \, p_X(x) \exp \biggl(-\frac{u}{n} \,
\frac{r_{Y|X}(y|x)}{q_{Y|X}(y|x)} \biggr) \, \mathrm{d}x \, \mathrm{d}y \,
\Biggr] \, \mathrm{d}u \\
\label{J1-proof}
&= \int_0^\infty \frac{1}{u} \, \Bigl[ e^{-u} - f^{n-1}\Bigl(\frac{u}{n}\Bigr) \,
g\Bigl(\frac{u}{n}\Bigr) \Bigr] \, \mathrm{d}u,
\end{align}
where $f(\cdot)$ and $g(\cdot)$ are defined in \eqref{f} and \eqref{g}, respectively.
Consequently, $f(0) = g(0) = 1$, and $0 \leq f(u), g(u) \leq 1$ for all $u>0$.

\subsection*{{\em D.2} Proof of Eq.~\eqref{J2}}
%\label{appendix: jamming 2}

\vspace*{-0.8cm}
\begin{align}
& \int p_{X^n, Y^n}(x^n, y^n)
\, \ln q_{Y^n|X^n}(y^n | x^n) \, \mathrm{d}x^n \, \mathrm{d}y^n \nonumber \\
&= \int p_{X^n, Y^n}(x^n, y^n)
\, \sum_{j=1}^n \ln q_{Y|X}(y_j|x_j) \, \mathrm{d}x^n \, \mathrm{d}y^n \\
&= \int \prod_{\ell=1}^n p_X(x_\ell) \cdot
\frac1n \sum_{i=1}^n \Biggl\{ \prod_{\ell \neq i} q_{Y|X}(y_\ell | x_\ell) \, r_{Y|X}(y_i | x_i) \Biggr\}
\sum_{j=1}^n \ln q_{Y|X}(y_j|x_j) \, \mathrm{d}x^n \, \mathrm{d}y^n \\[0.1cm]
&= \int \prod_{\ell=1}^n p_X(x_\ell) \cdot
\frac1n \sum_{i=1}^n \sum_{j=1}^n \Biggl\{ \prod_{\ell \neq i} q_{Y|X}(y_\ell | x_\ell) \cdot r_{Y|X}(y_i | x_i)
\, \ln q_{Y|X}(y_j|x_j) \Biggr\} \, \mathrm{d}x^n \, \mathrm{d}y^n \\[0.1cm]
\label{20200106a1}
&= \frac1n \, \int_{\cX^n} \prod_{\ell=1}^n p_X(x_\ell) \, \Biggl(
\sum_{i=1}^n \sum_{j=1}^n \int_{\cY^n} \prod_{\ell \neq i}
q_{Y|X}(y_\ell | x_\ell) \cdot r_{Y|X}(y_i | x_i) \, \ln q_{Y|X}(y_j|x_j)
\, \mathrm{d}y^n \Biggr) \, \mathrm{d}x^n
\end{align}
We next calculate the inner integral on the right--hand side of \eqref{20200106a1}.
For $i=j$,
\begin{align}
& \int_{\cY^n} \prod_{\ell \neq i}
q_{Y|X}(y_\ell | x_\ell) \cdot r_{Y|X}(y_i | x_i) \, \ln q_{Y|X}(y_j|x_j)
\, \mathrm{d}y^n \nonumber \\
\label{20200106a2}
&= \prod_{\ell \neq i} \int_{\cY} q_{Y|X}(y_\ell | x_\ell) \, \mathrm{d}y_\ell
\cdot \int_{\cY} r_{Y|X}(y_i | x_i) \, \ln q_{Y|X}(y_i|x_i) \, \mathrm{d}y_i \nonumber \\
&= \int_{\cY} r_{Y|X}(y|x_i) \, \ln q_{Y|X}(y|x_i) \, \mathrm{d}y,
\end{align}
else,
\begin{align}
& \int_{\cY^n} \prod_{\ell \neq i}
q_{Y|X}(y_\ell | x_\ell) \cdot r_{Y|X}(y_i | x_i) \, \ln q_{Y|X}(y_j|x_j)
\, \mathrm{d}y^n \nonumber \\
&= \prod_{\ell \notin \{i,j\}} \int_{\cY} q_{Y|X}(y_\ell | x_\ell) \, \mathrm{d}y_\ell
\cdot \int_{\cY} q_{Y|X}(y_j|x_j) \, \ln q_{Y|X}(y_j|x_j) \, \mathrm{d}y_j \cdot
\int_{\cY} r_{Y|X}(y_i | x_i) \, \mathrm{d}y_i \nonumber \\
\label{20200106a3}
&= \int_{\cY} q_{Y|X}(y|x_j) \, \ln q_{Y|X}(y|x_j) \, \mathrm{d}y.
\end{align}
Hence, from \eqref{20200106a1}--\eqref{20200106a3},
\begin{align}
& \int p_{X^n, Y^n}(x^n, y^n)
\, \ln q_{Y^n|X^n}(y^n | x^n) \, \mathrm{d}x^n
\, \mathrm{d}y^n \nonumber \\[0.1cm]
&= \frac1n \, \int_{\cX^n} \prod_{\ell=1}^n p_X(x_\ell) \, \Biggl(
\sum_{i=1}^n \int_{\cY} r_{Y|X}(y|x_i) \, \ln q_{Y|X}(y|x_i) \, \mathrm{d}y \nonumber \\
& \hspace*{3.6cm} + \sum_{i=1}^n \sum_{j \neq i} \int_{\cY} q_{Y|X}(y|x_j) \, \ln q_{Y|X}(y|x_j)
\, \mathrm{d}y \Biggr) \, \mathrm{d}x^n \nonumber \\[0.1cm]
&= \frac1n \Biggl[ \, \sum_{i=1}^n \Biggl\{ \, \prod_{\ell \neq i}
\int_{\cX} p_X(x_\ell) \, \mathrm{d}x_\ell \cdot
\int_{\cX \times \cY} r_{Y|X}(y|x_i) \, \ln q_{Y|X}(y|x_i) \, p_X(x_i) \,
\mathrm{d}x_i \, \mathrm{d}y \Biggr\} \nonumber \\[0.1cm]
&\hspace*{1cm} + \sum_{i=1}^n \sum_{j \neq i} \Biggl\{ \,
\prod_{\ell \neq j} \int_{\cX} p_X(x_\ell) \, \mathrm{d}x_\ell
\cdot \int_{\cX \times \cY} p_X(x_j) \, q_{Y|X}(y|x_j) \, \ln q_{Y|X}(y|x_j) \,
\mathrm{d}x_j \, \mathrm{d}y \Biggr\} \Biggr] \nonumber \\[0.1cm]
&= \frac1n \Biggl[ \, \sum_{i=1}^n
\int_{\cX \times \cY} r_{Y|X}(y|x) \, \ln q_{Y|X}(y|x) \, p_X(x) \,
\mathrm{d}x \, \mathrm{d}y \nonumber \\
& \hspace*{1cm} + \sum_{i=1}^n \sum_{j \neq i}
\int_{\cX \times \cY} p_X(x) \, q_{Y|X}(y|x) \, \ln q_{Y|X}(y|x) \,
\mathrm{d}x \, \mathrm{d}y \Biggr] \nonumber \\[0.1cm]
&= \int_{\cX \times \cY} p_X(x) \, r_{Y|X}(y|x) \, \ln q_{Y|X}(y|x) \,
\mathrm{d}x \, \mathrm{d}y \nonumber \\
& \hspace*{0.3cm} + (n-1)
\int_{\cX \times \cY} p_X(x) \, q_{Y|X}(y|x) \, \ln q_{Y|X}(y|x) \,
\mathrm{d}x \, \mathrm{d}y.
\end{align}

\subsection*{{\em D.3} Proof of Eqs.~\eqref{output dist.}--\eqref{t}}
%\label{appendix: jamming 3}

\vspace*{-0.7cm}
\begin{align}
p_{Y^n}(y^n) &= \int p_{Y^n|X^n}(y^n | x^n)
\, p_{X^n}(x^n) \, \mathrm{d}x^n \nonumber \\[0.1cm]
&= \frac1n \sum_{i=1}^n \Biggl\{ \prod_{j \neq i} \int q_{Y|X}(y_j | x_j)
\, p_X(x_j) \, \mathrm{d}x_j \cdot
\int r_{Y|X}(y_i | x_i) P_X(x_i) \, \mathrm{d}x_i \Biggr\} \nonumber \\[0.1cm]
&= \frac1n \sum_{i=1}^n \biggl\{ \prod_{j \neq i} v(y_j) \cdot w(y_i) \biggr\} \nonumber \\[0.1cm]
\label{20200107a1}
&= \prod_{j=1}^n v(y_j) \cdot \frac1n  \sum_{i=1}^n \frac{w(y_i)}{v(y_i)},
\quad \forall \, y^n \in \cY^n,
\end{align}
where $v(\cdot)$ and $w(\cdot)$ are probability densities on $\cY$, as defined
in \eqref{v} and \eqref{w}, respectively. This proves \eqref{output dist.}.

We next prove \eqref{eq: mean of Z ln Z}, which is used to calculate the
entropy of $Y^n$ with the density $p_{Y^n}(\cdot)$ in \eqref{20200107a1}.
In view of the integral representation of the logarithmic function in \eqref{ir},
and by interchanging the order of the integrations, we get that for a positive random variable $Z$
\begin{align}
\bE\bigl\{Z \ln Z\bigr\}
&= \int_0^\infty \frac1u \cdot \bE \bigl\{ Z \, \bigl(e^{-u} - e^{-uZ}\bigr) \bigr\}
\, \mathrm{d}u \nonumber \\[0.05cm]
&= \int_0^\infty \frac{\bE\bigl\{Z\bigr\} \, e^{-u} - \bE\bigl\{ Z e^{-uZ} \bigr\}}{u}
\; \mathrm{d}u \nonumber \\[0.05cm]
&= \int_0^\infty \frac{M'_Z(0) \, e^{-u} - M'_Z(-u)}{u} \; \mathrm{d}u,
\end{align}
which proves \eqref{eq: mean of Z ln Z}.
Finally, we prove \eqref{entropy of Y vec}.
In view of \eqref{20200107a1},
\begin{align}
h(Y^n)
&= -\int p_{Y^n}(y^n) \, \ln p_{Y^n}(y^n)
\, \mathrm{d}y^n \nonumber \\
&= -\int \prod_{j=1}^n v(y_j) \cdot \frac1n  \sum_{i=1}^n \frac{w(y_i)}{v(y_i)} \cdot
\Biggl[ \, \ln \Biggl( \, \prod_{j=1}^n v(y_j) \Biggr) +
\ln \biggl( \frac1n  \sum_{i=1}^n \frac{w(y_i)}{v(y_i)} \biggr) \Biggl]
\, \mathrm{d}y^n \nonumber \\
&= -\int \prod_{j=1}^n v(y_j) \cdot \frac1n  \sum_{i=1}^n \frac{w(y_i)}{v(y_i)} \cdot
\Biggl[ \, \sum_{j=1}^n \ln v(y_j) +
\ln \biggl( \frac1n  \sum_{i=1}^n \frac{w(y_i)}{v(y_i)} \biggr) \Biggl]
\, \mathrm{d}y^n \nonumber \\
&= -\int \prod_{\ell=1}^n v(y_\ell) \cdot \frac1n \sum_{i=1}^n \sum_{j=1}^n \frac{w(y_i)
\, \ln v(y_j)}{v(y_i)} \; \mathrm{d}y^n \nonumber \\
\label{20200107a2}
&\hspace*{0.4cm} -\int \prod_{j=1}^n v(y_j) \cdot \frac1n  \sum_{i=1}^n \frac{w(y_i)}{v(y_i)}
\cdot \ln \Biggl( \frac1n  \sum_{i=1}^n \frac{w(y_i)}{v(y_i)} \Biggr) \; \mathrm{d}y^n.
\end{align}
A calculation of the first integral on the right--hand side of \eqref{20200107a2} gives
\begin{align}
& \int \prod_{\ell=1}^n v(y_\ell) \cdot \frac1n \sum_{i=1}^n \sum_{j=1}^n \frac{w(y_i)
\, \ln v(y_j)}{v(y_i)} \; \mathrm{d}y^n \nonumber \\
&= \frac1n \sum_{i=1}^n \sum_{j=1}^n \int \prod_{\ell=1}^n v(y_\ell) \cdot
\frac{w(y_i) \, \ln v(y_j)}{v(y_i)} \; \mathrm{d}y^n \nonumber \\
\label{20200107a3}
&= \frac1n \sum_{i=1}^n \sum_{j=1}^n \int \prod_{\ell \neq i} v(y_\ell) \cdot
w(y_i) \, \ln v(y_j) \, \mathrm{d}y^n.
\end{align}
For $i=j$, the inner integral on the
right--hand side of \eqref{20200107a3} satisfies
\begin{align}
& \int \prod_{\ell \neq i} v(y_\ell) \cdot w(y_i) \, \ln v(y_j) \, \mathrm{d}y^n \nonumber \\
&= \prod_{\ell \neq i} \int v(y_\ell) \, \mathrm{d}y_\ell \cdot \int w(y_i) \, \ln v(y_i) \, \mathrm{d}y_i \nonumber \\
\label{20200107a4}
&= \int w(y) \, \ln v(y) \, \mathrm{d}y,
\end{align}
and for $i \neq j$,
\begin{align}
& \int \prod_{\ell \neq i} v(y_\ell) \cdot w(y_i) \, \ln v(y_j) \, \mathrm{d}y^n \nonumber \\
&= \prod_{\ell \neq i, j} \int v(y_\ell) \, \mathrm{d}y_\ell \cdot \int w(y_i) \, \mathrm{d}y_i
\cdot \int v(y_j) \, \ln v(y_j) \, \mathrm{d}y_j \nonumber \\
\label{20200107a5}
&= \int v(y) \, \ln v(y) \, \mathrm{d}y.
\end{align}
Therefore combining \eqref{20200107a3}--\eqref{20200107a5} gives
\begin{align}
& \int \prod_{\ell=1}^n v(y_\ell) \cdot \frac1n \sum_{i=1}^n \sum_{j=1}^n \frac{w(y_i)
\, \ln v(y_j)}{v(y_i)} \; \mathrm{d}y^n \nonumber \\
\label{20200107a6}
&= \int w(y) \, \ln v(y) \, \mathrm{d}y + (n-1) \int v(y) \, \ln v(y) \, \mathrm{d}y.
\end{align}
Finally, we calculate the
second integral on the right--hand side of \eqref{20200107a2}. Let $\mu_n$
be the probability density function defined as
\begin{align}
\mu_n(y^n) \dfn \prod_{j=1}^n v(y_j), \quad y^n \in \cY^n,
\end{align}
and let
\begin{align}
\label{eq: RV Z}
Z \dfn \frac1n \sum_{i=1}^n \frac{w(V_i)}{v(V_i)}
\end{align}
where $\{V_i\}_{i=1}^n$ are i.i.d.\ $\cY$-valued random variables with
a probability density function $v$. Then, in view of \eqref{eq: mean of Z ln Z},
the second integral on the right--hand side of \eqref{20200107a2} satisfies
\begin{align}
& \int \prod_{j=1}^n v(y_j) \cdot \frac1n  \sum_{i=1}^n \frac{w(y_i)}{v(y_i)}
\cdot \ln \Biggl( \frac1n  \sum_{i=1}^n \frac{w(y_i)}{v(y_i)} \Biggr) \; \mathrm{d}y^n \nonumber \\
&= \bE \bigl\{ Z \ln Z \bigr\} \nonumber \\
\label{20200108a1}
&= \int_0^\infty \frac{M'_Z(0) \, e^{-u} - M'_Z(-u)}{u} \; \mathrm{d}u.
\end{align}
The MGF of $Z$ is equal to
\begin{align}
M_Z(u) &= \int_{\cY^n} \prod_{i=1}^n v(r_i) \, \exp\Biggl(\frac{u}{n}
\sum_{i=1}^n \frac{w(r_i)}{v(r_i)}\Biggr) \, \mathrm{d}\underline{r} \nonumber \\
&= \prod_{i=1}^n \int_{\cY} v(r_i) \, \exp \biggl( \frac{u}{n} \, \frac{w(r_i)}{v(r_i)}
\biggr) \, \mathrm{d}r_i \nonumber \\
\label{eq: M_Z}
&= K^n \Bigl(\frac{u}{n}\Bigr),
\end{align}
where
\begin{align}
\label{eq: K}
K(u) \dfn \int_{\cY} v(y) \, \exp\biggl(\frac{u \, w(y)}{v(y)}\biggr) \, \mathrm{d}y, \quad \forall \, u \in \reals,
\end{align}
and consequently, \eqref{eq: K} yields
\begin{align}
\label{eq: diff M_Z}
M'_Z(u) &= K^{n-1}\Bigl(\frac{u}{n}\Bigr) \, K'\Bigl(\frac{u}{n}\Bigr) \nonumber \\
&= \Biggl( \int v(y) \, \exp\biggl(\frac{u \, w(y)}{v(y)}\biggr) \, \mathrm{d}y \Biggr)^{n-1}
\, \int w(y) \, \exp\biggl(\frac{u \, w(y)}{v(y)}\biggr) \, \mathrm{d}y,
\end{align}
and
\begin{align}
\label{eq: diff M_Z at zero}
M'_Z(0) = 1.
\end{align}
Therefore, combining \eqref{20200108a1}--\eqref{eq: diff M_Z at zero}
gives the following single-letter expression for the second multi-dimensional
integral on the right--hand side of \eqref{20200107a2}:
\begin{align}
\label{20200108a2}
\int \prod_{j=1}^n v(y_j) \cdot \frac1n  \sum_{i=1}^n \frac{w(y_i)}{v(y_i)}
\cdot \ln \Biggl( \frac1n  \sum_{i=1}^n \frac{w(y_i)}{v(y_i)} \Biggr) \; \mathrm{d}y^n
= \int_0^\infty \frac1u \, \biggl[ e^{-u} - t^{n-1}\Bigl(\frac{u}{n}\Bigr) \, s\Bigl(\frac{u}{n}\Bigr) \biggr] \, \mathrm{d}u,
\end{align}
where the functions $s(\cdot)$ and $t(\cdot)$ are defined in \eqref{s} and \eqref{t}, respectively.
Combining \eqref{20200107a2}, \eqref{20200107a6} and \eqref{20200108a2} gives \eqref{entropy of Y vec}.

\subsection*{{\em D.4} Specialization to a BSC with Jamming}
%\label{appendix: jamming 4}

In the BSC example considered, we have $\cX = \cY = \{0, 1\}$, and
\begin{align}
\label{BSC p}
& r_{Y|X}(y|x) = \varepsilon \, 1\{x \neq y\} + (1-\varepsilon) \, 1\{x=y\}, \\
\label{BSC q}
& q_{Y|X}(y|x) = \delta \, 1\{x \neq y\} + (1-\delta) \, 1\{x=y\},
\end{align}
where $1\{\mathrm{relation}\}$ is the indicator function that is equal to~1 if the relation
holds, and to zero otherwise. Recall that we assume $0 < \delta < \varepsilon \leq \tfrac12$.
Let
\begin{align}
\label{symmetric binary input}
p_X(0) = p_X(1) = \tfrac12,
\end{align}
be the binary symmetric source (BSS). From \eqref{f} and \eqref{g}, for $u \geq 0$,
\begin{align}
f(u) &= \sum_{x,y} p_X(x) \, q_{Y|X}(y|x) \, \exp\biggl(-\frac{u \, r_{Y|X}(y|x)}{q_{Y|X}(y|x)}\biggr) \nonumber \\
\label{20200109a1}
&= (1-\delta) \, \exp\biggl(-\frac{(1-\varepsilon) u}{1-\delta}\biggr) + \delta \, \exp\biggl(-\frac{\varepsilon u}{\delta}\biggr), \\
g(u) &= \sum_{x,y} p_X(x) \, r_{Y|X}(y|x) \, \exp\biggl(-\frac{u \, r_{Y|X}(y|x)}{q_{Y|X}(y|x)}\biggr) \nonumber \\
\label{20200109a2}
&= (1-\varepsilon) \, \exp\biggl(-\frac{(1-\varepsilon) u}{1-\delta}\biggr) + \varepsilon \, \exp\biggl(-\frac{\varepsilon u}{\delta}\biggr).
\end{align}
Furthermore, we get from \eqref{BSC p}, \eqref{BSC q} and \eqref{symmetric binary input} that
\begin{align}
- \sum_{x,y} p_X(x) \, r_{Y|X}(y|x) \, \ln q_{Y|X}(y|x)
&= -\varepsilon \ln \delta - (1-\varepsilon) \ln (1-\delta) \nonumber \\
\label{20200109a3}
&= d(\varepsilon \| \delta) + H_{\mathrm{b}}(\varepsilon),
\end{align}
and
\begin{align}
\label{20200109a4}
& - \sum_{x,y} p_X(x) \, q_{Y|X}(y|x) \, \ln q_{Y|X}(y|x) = H_{\mathrm{b}}(\delta).
\end{align}
Substituting \eqref{20200109a1}--\eqref{20200109a4} into \eqref{20200108a3}
(where integrals in \eqref{20200108a3} are replaced by sums) gives
\begin{align}
\label{20200109a5}
H(Y^n | X^n) &= d(\varepsilon \| \delta) + H_{\mathrm{b}}(\varepsilon)
+ (n-1) \, H_{\mathrm{b}}(\delta)
+ \int_0^\infty \Bigl[ f^{n-1}\Bigl(\frac{u}{n}\Bigr) \, g\Bigl(\frac{u}{n}\Bigr) - e^{-u} \Bigr]
\, \frac{\mathrm{d}u}{u}.
\end{align}

Since the input is a BSS, due to the symmetry of the channel
\eqref{transition law - jamming},
the output is also a BSS. This implies that (in units of nats)
\begin{align}
\label{20200109a6}
H(Y^n) = n \ln 2.
\end{align}
As a sanity check, we verify it by using \eqref{entropy of Y vec}.
From \eqref{v} and \eqref{w}, for $y \in \{0,1\}$,
\begin{align}
\label{v - BSC}
& v(y) = p_X(0) \, q_{Y|X}(y|0) + p_X(1) \, q_{Y|X}(y|1) = \tfrac12, \\
\label{w - BSC}
& w(y) = p_X(0) \, r_{Y|X}(y|0) + p_X(1) \, r_{Y|X}(y|1) = \tfrac12,
\end{align}
and, from \eqref{s} and \eqref{t}, it consequently follows that
\begin{align}
\label{s - BSC}
s(u) = w(0) \, \exp\biggl(-\frac{u \, w(0)}{v(0)}\biggr) + w(1) \, \exp\biggl(-\frac{u \, w(1)}{v(1)}\biggr)
= e^{-u}, \quad \forall \, u \geq 0,
\end{align}
and also
\begin{align}
\label{t - BSC}
t(u) = e^{-u}, \quad \forall \, u \geq 0.
\end{align}
It can be verified that substituting \eqref{v - BSC}--\eqref{t - BSC} into
\eqref{entropy of Y vec} reproduces \eqref{20200109a6}. Finally, subtracting
\eqref{20200109a5} from \eqref{20200109a6} gives \eqref{eq: MI BSC and jamming}.

\end{document}